    \newtheorem{theorem}{Theorem}[section]
    \newtheorem{lemma}[theorem]{Lemma}
    \newtheorem{corollary}[theorem]{Corollary}
    \theoremstyle{definition}
    \newtheorem{example}[theorem]{Example}
    \newcommand{\eps}{\varepsilon}
    \newcommand{\dd}{\cdots}
    \newcommand{\Z}{\mathbb Z}
    \newcommand{\Q}{\mathbb Q}
    \newcommand{\No}{\mathbb N_0}
    \newcommand{\Ni}{\mathbb N_1}
    \newcommand{\set}[2]{ \left\{ #1 \mid #2 \right\} }
    \newcommand{\px}[1]{ P_{#1} }
    \newcommand{\qx}[3]{ Q_{#1,#2,#3} }
    \newcommand{\sx}[2]{ S_{#1,#2} }
    \newcommand{\len}[1]{\mathrm{len}_{#1}}
    \newcommand{\lhp}{\mathcal M}
\begin{document}

\title{Systems of Word Equations and Polynomials: \\ a New Approach
    \thanks{Supported by the Academy of Finland under grant 121419}
}
\author{Aleksi Saarela \institute{
    Turku Centre for Computer Science TUCS and Department of Mathematics \\
    University of Turku, FI-20014 Turku, FINLAND
    \email{amsaar@utu.fi}
} }
\def\titlerunning{Systems of Word Equations and Polynomials: a New Approach}
\def\authorrunning{Aleksi Saarela}

\maketitle

\begin{abstract}
We develop new polynomial methods for studying systems of word
equations. We use them to improve some earlier results and to
analyze how sizes of systems of word equations satisfying certain
independence properties depend on the lengths of the equations.
These methods give the first nontrivial upper bounds for the sizes
of the systems.
\end{abstract}

\section{Introduction}

Word equations are a fundamental part of combinatorics on words, see
e.g. \cite{Lo83} or \cite{ChKa97} for a general reference on these
subjects. One of the basic results in the theory of word equations
is that a nontrivial equation causes a defect effect. In other
words, if $n$ words satisfy a nontrivial relation, then they can be
represented as products of $n-1$ words. Not much is known about the
additional restrictions caused by several independent relations
\cite{HaKa04}.

In fact, even the following simple question, formulated already in
\cite{CuKa83}, is still unanswered: how large can an independent
system of word equations on three unknowns be? The largest known
examples consist of three equations. The only known upper bound
comes from the Ehrenfeucht Compactness Property, proved in
\cite{AlLa85} and independently in \cite{Gu86}: an independent
system cannot be infinite. This question can be obviously asked also
in the case of $n > 3$ unknowns. Then there are independent systems
of size $\Theta(n^4)$ \cite{KaPl96}. Some results concerning
independent systems on three unknowns can be found in \cite{HaNo03},
\cite{CzKa07} and \cite{CzPl09}, but the open problem seems to be
very difficult to approach with current techniques.

There are many variations of the above question: we may study it in
the free semigroup, i.e. require that $h(x) \ne \eps$ for every
solution $h$ and unknown $x$, or examine only the systems having a
solution of rank $n-1$, or study chains of solution sets instead of
independent systems. See e.g. \cite{HaKaPl02}, \cite{HaKa04},
\cite{Cz08} and \cite{KaSa11}.

In this article we will try to use polynomials to study some
questions related to systems of word equations. Algebraic techniques
have been used before, most notably in the proof of Ehrenfeucht's
conjecture, which is based on Hilbert's Basis Theorem. However, the
way in which we use polynomials is quite different and allows us to
apply linear algebra to the problems.

One of the main contributions of this article is the development of
new methods for attacking problems on word equations. This is done
in Sections \ref{sect:fixedlength} and \ref{sect:solsets}. Other
contributions include simplified proofs and generalizations for old
results in Sect. \ref{sect:appl} and in the end of Sect.
\ref{sect:solsets}, and studying maximal sizes of independent
systems of equations in Sect. \ref{sect:indsyst}. Thus the
connection between word equations and linear algebra is not only
theoretically interesting, but is also shown to be very useful at
establishing simple-looking results that have been previously
unknown, or that have had only very complicated proofs. In addition
to the results of the paper, we believe that the techniques may be
useful in further analysis of word equations.

Now we give a brief overview of the paper. First, in Sect.
\ref{sect:basic} we define a way to transform words into polynomials
and prove some basic results using these polynomials.

In Sect. \ref{sect:fixedlength} we prove that if the lengths of the
unknowns are fixed, then there is a connection between the ranks of
solutions of a system of equations and the rank of a certain
polynomial matrix. This theorem is very important for all the later
results.

Section \ref{sect:appl} contains small generalizations of two
earlier results. These are nice examples of the methods developed in
Sect. \ref{sect:fixedlength} and have independent interest, but they
are not important for the later sections.

In Sect. \ref{sect:solsets} we analyze the results of Sect.
\ref{sect:fixedlength}, when the lengths of the unknowns are not
fixed. For every solution these lengths form an $n$-dimensional
vector, called the \emph{length type} of the solution. We prove that
the length types of all solutions of rank $n-1$ of a pair of
equations are covered by a finite union of $(n-1)$-dimensional
subspaces, if the equations are not equivalent on solutions of rank
$n-1$. This means that the solution sets of pairs of equations are
in some sense more structured than the solution sets of single
equations. This theorem is the key to proving the remaining results.
We conclude Sect. \ref{sect:solsets} by proving a theorem about
unbalanced equations. This gives a considerably simpler reproof and
a generalization of a result in \cite{HaNo03}

Finally, in Sect. \ref{sect:indsyst} we return to the question about
sizes of independent systems. There is a trivial bound for the size
of a system depending on the length of the longest equation, because
there are only exponentially many equations of a fixed length. We
prove that if the system is independent even when considering only
solutions of rank $n-1$, then there is an upper bound for the size
of the system depending quadratically on the length of the shortest
equation. Even though it does not give a fixed bound even in the
case of three unknowns, it is a first result of its type -- hence
opening, we hope, a new avenue for future research.

\section{Basic Theorems} \label{sect:basic}

Let $|w|$ be the length of a word $w$ and $|w|_a$ be the number of
occurrences of a letter $a$ in $w$. We use the notation $u \leq v$,
if $u$ is a prefix of $v$. We denote the set of nonnegative integers
by $\No$ and the set of positive integers by $\Ni$. The empty word
is denoted by $\eps$.

In this section we give proofs for some well-known results. These
serve as examples of the polynomial methods used. Even though the
standard proofs of these are simple, we hope that the proofs given
here illustrate how properties of words can be formulated and proved
in terms of polynomials.

Let $\Sigma \subset \Ni$ be an alphabet of numbers. For a word $w =
a_{0} \dots a_{n-1} \in \Sigma^n$ we define a polynomial
\begin{equation*}
    \px{w} = a_{0} + a_1 X^{1} + \dots + a_{n-1} X^{n-1}.
\end{equation*}
Now $w \mapsto \px{w}$ is an injective mapping from words to
polynomials (here we need the assumption $0 \notin \Sigma$). If
$w_1, \dots, w_m \in \Sigma^*$, then
\begin{equation} \label{eq:prodp}
    \px{w_1 \dots w_m}
    = \px{w_1} + \px{w_2} X^{|w_1|}
        + \dots + \px{w_m} X^{|w_1 \dots w_{m-1}|} .
\end{equation}
If $w \in \Sigma^+$ and $k \in \No$, then
\begin{equation*}
    \px{w^k} = \px{w} \frac{X^{k|w|} - 1}{X^{|w|} - 1}
\end{equation*}

The polynomial $\px{w}$ can be viewed as a characteristic polynomial
of the word $w$. We could also replace $X$ with a suitable number
$b$ and get a number whose reverse $b$-ary representation is $w$. Or
we could let the coefficients of $\px{w}$ be from some other
commutative ring than $\Z$. Similar ideas have been used to analyze
words in many places, see e.g. \cite{Ku97}, \cite{Sa85} and
\cite{HoKo09}.

\begin{example}
If $w = 1212$, then
\begin{math}
    \px{w} = 1 + 2X + X^2 + 2X^3 .
\end{math}
\end{example}

A word $w$ is \emph{primitive}, if it is not of the form $u^k$ for
any $k > 1$. If $w = u^k$ and $u$ is primitive, then $u$ is a
\emph{primitive root} of $w$.

\begin{lemma} \label{lem:primdiv}
If $w$ is primitive, then $\px{w}$ is not divisible by any
polynomial of the form
\begin{math}
    (X^{|w|} - 1) / (X^{n} - 1),
\end{math}
where $n < |w|$ is a divisor of $|w|$.
\end{lemma}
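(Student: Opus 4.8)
The plan is to argue the contrapositive: assume that $\px{w}$ is divisible by $(X^{|w|}-1)/(X^n-1)$ for some divisor $n$ of $m:=|w|$ with $n<m$, and deduce that $w$ is not primitive. Write $w=a_0\dots a_{m-1}$ with each $a_i\in\Sigma\subset\Ni$, and set $D=(X^m-1)/(X^n-1)=1+X^n+X^{2n}+\dots+X^{m-n}$, a monic polynomial of degree $m-n$ with exactly $m/n\geq 2$ terms. Since $a_{m-1}\geq 1$ we have $\deg\px{w}=m-1$, so if $\px{w}=QD$ then $\deg Q=n-1$; and because $D$ is monic, $Q$ has integer coefficients, say $Q=b_0+b_1X+\dots+b_{n-1}X^{n-1}$.

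The key step is to expand $QD$ and compare coefficients with $\px{w}$. As $D=\sum_{j=0}^{m/n-1}X^{jn}$ and $0\leq i\leq n-1$, the division algorithm shows that the exponents $i+jn$ occurring in $QD$ are pairwise distinct and range exactly over $0,1,\dots,m-1$, and the coefficient of $X^{i+jn}$ in $QD$ is simply $b_i$. Matching this against $\px{w}=\sum_{k=0}^{m-1}a_kX^k$ yields $a_{i+jn}=b_i$ for all admissible $i,j$, i.e. $a_k=b_{k\bmod n}$ for every $k$. Hence $w$ has period $n$, and since $n\mid m$ this forces $w=u^{m/n}$ with $u=a_0\dots a_{n-1}$. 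Because $n<m$ we have $m/n\geq 2$, so $w$ is not primitive, the desired contradiction.

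I do not expect a genuine obstacle here; the argument is essentially the converse of the identity $\px{w^k}=\px{w}(X^{k|w|}-1)/(X^{|w|}-1)$ noted above, read off by coefficient comparison. The only point needing a little care is the claim that the quotient $Q$ has integer coefficients, so that the $b_i$ are legitimate letters; this follows from $D$ being monic. Alternatively one may run the whole comparison over $\Q[X]$ and observe afterwards that each $b_i$ equals some $a_k\in\Sigma$, so that $u\in\Sigma^+$ is a genuine word over the alphabet.
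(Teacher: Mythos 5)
Your proposal is correct and follows essentially the same route as the paper: write $\px{w}$ as a quotient polynomial of degree $n-1$ times $1+X^n+\dots+X^{|w|-n}$, compare coefficients, and conclude $w=u^{|w|/n}$. You simply spell out the details (integrality and degree of the quotient, distinctness of the exponents $i+jn$) that the paper leaves implicit.
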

\begin{proof}
If $\px{w}$ is divisible by $(X^{|w|} - 1) / (X^n - 1)$, then there
are numbers $a_0, \dots, a_{n-1}$ such that
\begin{equation*}
    \px{w}
    = (a_{0} + a_1 X^{1} + \dots + a_{n-1} X^{n-1})
        \frac{X^{|w|} - 1}{X^n - 1}
    = (a_{0} + a_1 X^{1} + \dots + a_{n-1} X^{n-1})
        (1 + X^n + \dots + X^{|w|-n}) ,
\end{equation*}
so $w = (a_{0} \dots a_{n-1})^{|w|/n}$.
\end{proof}

The next two theorems are among the most basic and well-known
results in combinatorics on words (except for item \eqref{item:r} of
Theorem \ref{thm:commutation}).

\begin{theorem}
Every nonempty word has a unique primitive root.
\end{theorem}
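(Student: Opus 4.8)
The plan is to handle existence and uniqueness separately, letting the polynomial machinery do the real work only in the uniqueness part.

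For existence I would argue directly: among all words $u$ with $w = u^k$ for some $k \geq 1$, choose one of minimal length and call it $u$. If $u$ were not primitive, say $u = s^j$ with $j > 1$, then $s$ would be shorter and $w = s^{jk}$, contradicting minimality; hence $u$ is a primitive root of $w$.

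For uniqueness, suppose $w = u^k = v^\ell$ with $u$ and $v$ both primitive. Applying the formula $\px{w^m} = \px{w}(X^{m|w|}-1)/(X^{|w|}-1)$ to $u$ and to $v$ and cancelling the common factor $X^{|w|}-1$ gives the polynomial identity
\[
    \px{u}\,(X^{|v|}-1) = \px{v}\,(X^{|u|}-1).
\]
Set $d = \gcd(|u|,|v|)$. The key auxiliary fact is that $\gcd(X^a-1, X^b-1) = X^{\gcd(a,b)}-1$, which follows from the Euclidean algorithm via $X^a-1 = X^{a-b}(X^b-1) + (X^{a-b}-1)$; consequently, writing $A = (X^{|u|}-1)/(X^d-1)$ and $B = (X^{|v|}-1)/(X^d-1)$ (both honest polynomials, since $d$ divides $|u|$ and $|v|$), the polynomials $A$ and $B$ are coprime. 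Substituting into the identity and cancelling $X^d-1$ yields $\px{u}\,B = \px{v}\,A$, and coprimality of $A$ and $B$ then forces $A \mid \px{u}$ and $B \mid \px{v}$.

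Now Lemma \ref{lem:primdiv} closes the argument. If $d < |u|$, then $A = (X^{|u|}-1)/(X^d-1)$ is exactly a forbidden divisor of $\px{u}$ for the primitive word $u$ — a contradiction; so $d = |u|$, i.e.\ $|u|$ divides $|v|$, and then $B = (X^{|v|}-1)/(X^{|u|}-1)$. If in addition $|u| < |v|$, the same lemma applied to the primitive word $v$ forbids $B \mid \px{v}$ — again a contradiction; hence $|u| = |v|$. Then $A = B = 1$, so $\px{u} = \px{v}$, and injectivity of $w \mapsto \px{w}$ gives $u = v$. The only step carrying genuine content is the passage from the polynomial identity to a divisibility that Lemma \ref{lem:primdiv} can reject — that is, the computation of $\gcd(X^a-1, X^b-1)$ and the resulting coprimality of $A$ and $B$; everything before it is the defining identity for $\px{w^m}$, and everything after it is bookkeeping.
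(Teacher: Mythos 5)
Your proof is correct and takes essentially the same route as the paper's: derive $\px{u}(X^{|v|}-1)=\px{v}(X^{|u|}-1)$ from the $\px{w^k}$ formula, use $\gcd(X^a-1,X^b-1)=X^{\gcd(a,b)}-1$ to force $(X^{|u|}-1)/(X^d-1)\mid\px{u}$ and $(X^{|v|}-1)/(X^d-1)\mid\px{v}$, and let Lemma \ref{lem:primdiv} rule out $d<|u|$ and $d<|v|$. The only differences are that you make the existence half and the coprimality bookkeeping explicit, both of which the paper leaves implicit.
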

\begin{proof}
Let $u^m = v^n$, where $u$ and $v$ are primitive. We need to show
that $u = v$. We have
\begin{equation*}
    \px{u} \frac{X^{m|u|}-1}{X^{|u|}-1} = \px{u^m}
    = \px{v^n} = \px{v} \frac{X^{n|v|}-1}{X^{|v|}-1} .
\end{equation*}
Because $m|u| = n|v|$, we get
\begin{math}
    \px{u} (X^{|v|}-1) = \px{v} (X^{|u|}-1) .
\end{math}
If $d = \gcd(|u|, |v|)$, then $\gcd(X^{|u|}-1, X^{|v|}-1) = X^d-1$.
Thus $\px{u}$ must be divisible by $(X^{|u|}-1) / (X^d-1)$ and
$\px{v}$ must be divisible by $(X^{|v|}-1) / (X^d-1)$. By Lemma
\ref{lem:primdiv}, both $u$ and $v$ can be primitive only if $|u| =
d = |v|$.
\end{proof}

The primitive root of a word $w \in \Sigma^+$ is denoted by
$\rho(w)$.

\begin{theorem} \label{thm:commutation}
For $u,v \in \Sigma^+$, the following are equivalent:
\begin{enumerate}
\item $\rho(u) = \rho(v)$, \label{item:rho}
\item if $U,V \in \{u,v\}^*$ and $|U| = |V|$, then $U = V$, \label{item:all}
\item $u$ and $v$ satisfy a nontrivial relation, \label{item:exists}
\item $\px{u} / (X^{|u|} - 1) = \px{v} / (X^{|v|} - 1)$. \label{item:r}
\end{enumerate}
\end{theorem}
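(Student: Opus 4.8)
The plan is to establish the chain of implications $\eqref{item:rho} \Rightarrow \eqref{item:all} \Rightarrow \eqref{item:exists} \Rightarrow \eqref{item:r} \Rightarrow \eqref{item:rho}$, using the polynomial representation throughout so that the argument stays uniform in style with the preceding results. First I would prove $\eqref{item:rho} \Rightarrow \eqref{item:all}$: if $\rho(u) = \rho(v) = t$, then $u = t^i$ and $v = t^j$, so any $U, V \in \{u,v\}^*$ are powers of $t$, say $U = t^p$ and $V = t^q$; since $|U| = |V|$ forces $p = q$, we get $U = V$. Next, $\eqref{item:all} \Rightarrow \eqref{item:exists}$ is immediate by taking $U = uv$ and $V = vu$: these have equal length, so $uv = vu$ is a nontrivial relation (nontrivial since $u \ne v$ would be needed, but if $u = v$ the pair trivially satisfies, say, $uu\cdot v = u\cdot vv$ — I would phrase the "nontrivial relation" condition so this edge case is handled, or simply note $u,v$ satisfy $xy = yx$ which is nontrivial as a word equation).

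For $\eqref{item:exists} \Rightarrow \eqref{item:r}$, suppose $U = V$ with $U = w_1 \cdots w_k$ and $V = w_1' \cdots w_l$ both products of $u$'s and $v$'s, the relation being nontrivial. Applying \eqref{eq:prodp} and the formula $\px{w^m} = \px{w}(X^{m|w|}-1)/(X^{|w|}-1)$, I would expand both $\px{U}$ and $\px{V}$ as $\Z$-linear combinations (with polynomial coefficients) of $\px{u}/(X^{|u|}-1)$ and $\px{v}/(X^{|v|}-1)$. Writing $f = \px{u}/(X^{|u|}-1)$ and $g = \px{v}/(X^{|v|}-1)$, the identity $\px{U} = \px{V}$ together with $|U| = |V|$ should collapse to a relation of the form $r(X)\,f = r(X)\,g$ for some nonzero polynomial $r$ — here the key bookkeeping is that the "offset" powers $X^{|w_1 \cdots w_{i-1}|}$ combine, using $m|u| = $ total length contributions, into a common factor. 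Cancelling $r$ gives $f = g$, which is exactly \eqref{item:r}. The main obstacle will be making this cancellation precise: one must check that the coefficient polynomial multiplying $f - g$ is genuinely nonzero (so that it can be cancelled in the polynomial ring $\Q[X]$), which is where nontriviality of the relation is used.

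Finally, $\eqref{item:r} \Rightarrow \eqref{item:rho}$: from $\px{u}/(X^{|u|}-1) = \px{v}/(X^{|v|}-1)$ I would pass to primitive roots. Writing $u = \rho(u)^i$, we have $\px{u}/(X^{|u|}-1) = \px{\rho(u)}/(X^{|\rho(u)|}-1)$ by the power formula, so without loss of generality $u,v$ are primitive and it suffices to show a primitive word is determined by $\px{u}/(X^{|u|}-1)$. Clearing denominators gives $\px{u}(X^{|v|}-1) = \px{v}(X^{|u|}-1)$, and now I would invoke exactly the gcd argument from the proof of the previous theorem: with $d = \gcd(|u|,|v|)$ and $\gcd(X^{|u|}-1, X^{|v|}-1) = X^d - 1$, divisibility forces $\px{u}$ to be divisible by $(X^{|u|}-1)/(X^d-1)$, so by Lemma \ref{lem:primdiv} primitivity of $u$ gives $|u| = d$, and symmetrically $|v| = d$; hence $|u| = |v|$, and then $\px{u}(X^{|v|}-1) = \px{v}(X^{|u|}-1)$ immediately yields $\px{u} = \px{v}$, so $u = v$ by injectivity of $w \mapsto \px{w}$. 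This reuse of the prior proof is the cleanest route and keeps the implication \eqref{item:r} genuinely on the same footing as \eqref{item:rho}.
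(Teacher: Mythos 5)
Your proof is correct and follows the same cycle of implications as the paper, and three of the four steps are essentially identical in substance. For \eqref{item:rho} $\Rightarrow$ \eqref{item:all} and \eqref{item:all} $\Rightarrow$ \eqref{item:exists} you and the paper do the same thing (your worry about the edge case $u=v$ is unnecessary: the formal relation $xy=yx$ is nontrivial regardless of whether $u$ and $v$ coincide as words). For \eqref{item:exists} $\Rightarrow$ \eqref{item:r} you and the paper both reduce $\px{U}-\px{V}$ to $p\cdot\bigl(\px{u}/(X^{|u|}-1)-\px{v}/(X^{|v|}-1)\bigr)$ and both leave the two key bookkeeping facts --- that the two coefficient polynomials are negatives of each other (this is where $|U|=|V|$ enters, via the telescoping identity $\px{w}X^{\ell}= \frac{\px{w}}{X^{|w|}-1}(X^{\ell+|w|}-X^{\ell})$), and that $p\ne 0$ for a nontrivial relation --- at the same informal level; you at least flag the nonvanishing of $p$ as the point needing care. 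The only genuine divergence is \eqref{item:r} $\Rightarrow$ \eqref{item:rho}: the paper multiplies both sides by $X^{|u||v|}-1$ to get $\px{u^{|v|}}=\px{v^{|u|}}$, hence $u^{|v|}=v^{|u|}$ by injectivity and $\rho(u)=\rho(v)$ at once, whereas you reduce to primitive $u,v$ and rerun the $\gcd$/Lemma \ref{lem:primdiv} argument from the unique-primitive-root theorem. Your route is valid and makes the reuse of the earlier machinery explicit, but the paper's is shorter and avoids invoking the cyclotomic divisibility argument a second time.
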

\begin{proof}
(\ref{item:rho}) $\Rightarrow$ (\ref{item:all}):
\begin{math}
    U = \rho(u)^{|U|/|\rho(u)|}
    = \rho(u)^{|V|/|\rho(u)|} = V .
\end{math}

(\ref{item:all}) $\Rightarrow$ (\ref{item:exists}): Clear.

(\ref{item:exists}) $\Rightarrow$ (\ref{item:r}): Let
\begin{math}
    u_1 \dots u_m = v_1 \dots v_n,
\end{math}
where $u_i, v_j \in \{u,v\}$. Now
\begin{equation*}
    0 = \px{u_1 \dots u_m} - \px{v_1 \dots v_n}
    = \frac{\px{u}}{X^{|u|} - 1} p - \frac{\px{v}}{X^{|v|} - 1} p
\end{equation*}
for some polynomial $p$. If $m \ne n$ or $u_i \ne v_i$ for some $i$,
then $p \ne 0$, and thus $\px{u} / (X^{|u|} - 1) = \px{v} / (X^{|v|}
- 1)$.

(\ref{item:r}) $\Rightarrow$ (\ref{item:rho}): We have
\begin{math}
    \px{u^{|v|}}
    = \px{v^{|u|}} ,
\end{math}
so $u^{|v|} = v^{|u|}$ and
\begin{math}
    \rho(u) = \rho(u^{|v|}) = \rho(v^{|u|}) = \rho(v) .
\end{math}
\end{proof}

Similarly, polynomials can be used to give a simple proof for the
theorem of Fine and Wilf. In fact, one of the original proofs in
\cite{FiWi65} uses power series. Algebraic techniques have also been
used to prove variations of this theorem \cite{MiShWa01}.

\begin{theorem}[Fine and Wilf] \label{thm:finewilf}
If $u^i$ and $v^j$ have a common prefix of length
\begin{math}
    |u| + |v| - \gcd(|u|, |v|),
\end{math}
then $\rho(u) = \rho(v)$.
\end{theorem}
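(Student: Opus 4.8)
The plan is to mimic the polynomial proof of Theorem~\ref{thm:commutation}, replacing the exact equality $\px{u^i} = \px{v^j}$ by a congruence modulo a high power of $X$ that encodes ``agreement on a common prefix''. First I would record the elementary observation that two words $s$ and $t$ share a common prefix of length $\ell$ precisely when $\px{s} \equiv \px{t} \pmod{X^\ell}$, and that one may assume $i$ and $j$ are large enough that $|u^i|, |v^j| \ge \ell := |u| + |v| - \gcd(|u|,|v|)$ (indeed, if $\rho(u) \ne \rho(v)$ the shortest counterexample has both powers of length close to $\ell$, so this costs nothing). Writing $d = \gcd(|u|,|v|)$, the target is to show $\px{u}/(X^{|u|}-1) = \px{v}/(X^{|v|}-1)$, i.e. the equivalent condition \eqref{item:r} of Theorem~\ref{thm:commutation}, from which $\rho(u) = \rho(v)$ follows.

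Next I would translate the hypothesis using the power formula $\px{w^k} = \px{w}(X^{k|w|}-1)/(X^{|w|}-1)$. Agreement of $u^i$ and $v^j$ on a prefix of length $\ell$ gives
\begin{equation*}
    \px{u}\,\frac{X^{i|u|}-1}{X^{|u|}-1}
    \equiv \px{v}\,\frac{X^{j|v|}-1}{X^{|v|}-1}
    \pmod{X^\ell}.
\end{equation*}
Clearing denominators (multiplying by $(X^{|u|}-1)(X^{|v|}-1)$, which only shifts and scales and does not lose information since these are units in a suitable localization, or more simply by keeping track of the lowest-degree terms), one obtains a congruence of the form
\begin{equation*}
    \px{u}(X^{|v|}-1) \equiv \px{v}(X^{|u|}-1) \pmod{X^{\ell}}
\end{equation*}
after absorbing the high-degree parts $X^{i|u|}, X^{j|v|}$, which contribute nothing below degree $\ell \le \min(i|u|, j|v|)$. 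The point is that $\deg\big(\px{u}(X^{|v|}-1)\big) = |u| + |v| - 1 = \ell + d - 1$ and similarly for the other side, so the polynomials $\px{u}(X^{|v|}-1)$ and $\px{v}(X^{|u|}-1)$, of degree $\ell + d - 1$, agree on all coefficients up to degree $\ell - 1$, leaving only the top $d$ coefficients possibly free.

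The crux is then to upgrade this ``agreement up to degree $\ell-1$'' to full equality. Here I would exploit the algebraic structure: the difference $g := \px{u}(X^{|v|}-1) - \px{v}(X^{|u|}-1)$ is divisible by $X^\ell = X^{|u|+|v|-d}$, has degree at most $|u|+|v|-1$, and is antisymmetric under the substitution $X \mapsto X^{-1}$ up to the obvious monomial factor coming from the reversal symmetry of the factors $X^{|u|}-1$; combined with the degree bound this forces $g = 0$. (Alternatively, and perhaps more cleanly, one shows directly that $g/(X^d-1)$ is a polynomial of degree $< \ell$ that is divisible by $X^\ell/(X^d - 1) \cdot (X^d-1)/\!\gcd$-type factors, hence zero, invoking the $\gcd(X^{|u|}-1, X^{|v|}-1) = X^d - 1$ identity already used in the proof of uniqueness of the primitive root.) Once $g = 0$ we are exactly in case \eqref{item:r} of Theorem~\ref{thm:commutation} and conclude $\rho(u) = \rho(v)$. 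I expect the main obstacle to be the bookkeeping in this last step: making precise why the $d$ ``missing'' high-degree coefficients are in fact determined, i.e. why the reversal/degree argument closes the gap without circularity, and handling the reduction to powers large enough that the truncation $X^\ell$ sits below both $X^{i|u|}$ and $X^{j|v|}$ without weakening the bound $|u|+|v|-\gcd(|u|,|v|)$.
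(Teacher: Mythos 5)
The paper states this theorem without giving its own proof, so there is nothing to compare against line by line; but your skeleton is clearly the intended polynomial argument, and most of it is sound. The reduction of the hypothesis to the congruence $\px{u}(X^{|v|}-1) \equiv \px{v}(X^{|u|}-1) \pmod{X^{\ell}}$, with $\ell = |u|+|v|-d$ and $d = \gcd(|u|,|v|)$, is correct: a common prefix of length $\ell$ already forces $i|u|, j|v| \ge \ell$ (so your worry about enlarging $i,j$ is moot), and multiplying $\px{u^i} \equiv \px{v^j} \pmod{X^\ell}$ through by $(X^{|u|}-1)(X^{|v|}-1)$ kills the terms $X^{i|u|}$ and $X^{j|v|}$ modulo $X^\ell$. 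The genuine gap is at the crux step, where you must show that $g := \px{u}(X^{|v|}-1)-\px{v}(X^{|u|}-1)$, known to be divisible by $X^\ell$ and of degree at most $|u|+|v|-1 = \ell+d-1$, is actually zero. Your primary mechanism, antisymmetry under $X \mapsto X^{-1}$, fails: substituting $1/X$ and multiplying by $X^{|u|+|v|-1}$ sends $\px{u}$ to $\px{\tilde u}$ for the \emph{reversed} word $\tilde u$, so it carries $g$ to $-\bigl(\px{\tilde u}(X^{|v|}-1)-\px{\tilde v}(X^{|u|}-1)\bigr)$, a different polynomial altogether; no constraint on the top $d$ coefficients of $g$ comes out of this unless $u$ and $v$ are palindromes.

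Fortunately the parenthetical alternative you sketch contains the right idea, and once stated cleanly it closes the gap in one line: $X^d-1$ divides both $X^{|u|}-1$ and $X^{|v|}-1$ (since $d$ divides both lengths), hence divides $g$; $X^\ell$ divides $g$ by the congruence; $X^\ell$ and $X^d-1$ are coprime in $\Z[X]$ because $X^d-1$ has unit constant term; therefore $X^\ell(X^d-1)$ divides $g$, and since $\deg\bigl(X^\ell(X^d-1)\bigr) = \ell+d = |u|+|v| > \deg g$, we get $g=0$. Then $\px{u}/(X^{|u|}-1) = \px{v}/(X^{|v|}-1)$ is exactly item \eqref{item:r} of Theorem \ref{thm:commutation}, giving $\rho(u)=\rho(v)$. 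So the proof is salvageable, but as written it hedges between an argument that does not work and one stated too vaguely to verify (the phrase about ``$X^\ell/(X^d-1)\cdot(X^d-1)/\gcd$-type factors'' does not parse as mathematics); you should commit to the divisibility-plus-degree count and drop the reversal argument entirely.
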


\section{Solutions of Fixed Length} \label{sect:fixedlength}

In this section we apply polynomial techniques to word equations.
From now on, we will assume that the unknowns are ordered as $x_1,
\dots, x_n$ and that $\Xi$ is the set of these unknowns.

A (coefficient-free) \emph{word equation} $u = v$ on $n$ unknowns
consists of two words $u, v \in \Xi^*$. A \emph{solution} of this
equation is any morphism $h: \Xi^* \to \Sigma^*$ such that $h(u) =
h(v)$. The equation is \emph{trivial}, if $u$ and $v$ are the same
word.

The (combinatorial) \emph{rank} of a morphism $h$ is the smallest
number $r$ for which there is a set $A$ of $r$ words such that $h(x)
\in A^*$ for every unknown $x$. A morphism of rank at most one is
\emph{periodic}.

Let $h: \Xi^* \to \Sigma^*$ be a morphism. The \emph{length type} of
$h$ is the vector
\begin{equation*}
    L = (|h(x_1)|, \dots, |h(x_n)|) \in \No^n.
\end{equation*}
This length type $L$ determines a morphism $\len{L}: \Xi^* \to \No,
\len{L}(w) = |h(w)|$.

For a word equation $E: y_1 \dots y_k = z_1 \dots z_l$, where $y_i,
z_i \in \Xi$, a variable $x \in \Xi$ and a length type $L$, let
\begin{equation*}
    \qx{E}{x}{L} = \sum_{y_i = x} X^{\len{L}(y_1 \dots y_{i-1})}
        - \sum_{z_i = x} X^{\len{L}(z_1 \dots z_{i-1})} .
\end{equation*}

\begin{theorem} \label{thm:weqpeq}
A morphism $h: \Xi^* \to \Sigma^*$ of length type $L$ is a solution
of an equation $E: u = v$ if and only if
\begin{equation*}
    \sum_{x \in \Xi} \qx{E}{x}{L} \px{h(x)} = 0.
\end{equation*}
\end{theorem}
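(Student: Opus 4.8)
The plan is to unpack both sides of the equation $h(u) = h(v)$ using the product formula \eqref{eq:prodp} and then collect terms according to which unknown produced them. Write $u = y_1 \dots y_k$ and $v = z_1 \dots z_l$ with each $y_i, z_i \in \Xi$. Since $h$ has length type $L$, we have $|h(y_1 \dots y_{i-1})| = \len{L}(y_1 \dots y_{i-1})$, and similarly for $v$. Applying \eqref{eq:prodp} gives
\begin{equation*}
    \px{h(u)} = \sum_{i=1}^{k} \px{h(y_i)} X^{\len{L}(y_1 \dots y_{i-1})},
    \qquad
    \px{h(v)} = \sum_{i=1}^{l} \px{h(z_i)} X^{\len{L}(z_1 \dots z_{i-1})}.
\end{equation*}

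Next I would regroup each sum by the value of the unknown: for a fixed $x \in \Xi$, the occurrences of $x$ among the $y_i$ contribute $\px{h(x)} \sum_{y_i = x} X^{\len{L}(y_1 \dots y_{i-1})}$, since $h(y_i) = h(x)$ whenever $y_i = x$. Doing this for both $u$ and $v$ and subtracting, the coefficient of $\px{h(x)}$ becomes exactly $\sum_{y_i = x} X^{\len{L}(y_1 \dots y_{i-1})} - \sum_{z_i = x} X^{\len{L}(z_1 \dots z_{i-1})}$, which is the definition of $\qx{E}{x}{L}$. Hence
\begin{equation*}
    \px{h(u)} - \px{h(v)} = \sum_{x \in \Xi} \qx{E}{x}{L} \px{h(x)}.
\end{equation*}

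The forward direction is then immediate: if $h(u) = h(v)$ then $\px{h(u)} - \px{h(v)} = 0$, so the polynomial sum vanishes. For the converse, if the sum is zero then $\px{h(u)} = \px{h(v)}$, and since $w \mapsto \px{w}$ is injective on $\Sigma^*$ (using $0 \notin \Sigma$), we conclude $h(u) = h(v)$, i.e. $h$ is a solution of $E$.

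I do not anticipate a serious obstacle here; the statement is essentially a bookkeeping identity. The one point requiring mild care is the regrouping step: one must be sure that the exponents appearing in $\qx{E}{x}{L}$ are precisely the partial-length prefixes dictated by \eqref{eq:prodp}, which is why the length type $L$ (rather than $h$ itself) suffices to define $\qx{E}{x}{L}$ — the exponents depend only on $\len{L}$, not on the actual letters of $h(x)$. The other point worth stating explicitly is the appeal to injectivity of $P_{\cdot}$ in the converse direction, which is where the hypothesis $0 \notin \Sigma$ is used.
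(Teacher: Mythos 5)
Your proof is correct and follows exactly the paper's argument: expand $\px{h(u)}$ and $\px{h(v)}$ via \eqref{eq:prodp}, regroup by unknown to obtain $\px{h(u)} - \px{h(v)} = \sum_{x \in \Xi} \qx{E}{x}{L} \px{h(x)}$, and use injectivity of $w \mapsto \px{w}$. You have merely written out the bookkeeping that the paper leaves implicit.
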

\begin{proof}
Now $h(u) = h(v)$ if and only if $\px{h(u)} = \px{h(v)}$, and the
polynomial $\px{h(u)} - \px{h(v)}$ can be written as $\sum_{x \in
\Xi} \qx{E}{x}{L} \px{h(x)}$ by \eqref{eq:prodp}.
\end{proof}

\begin{example}
Let $\Xi = \{x,y,z\}$, $E: xyz = zxy$ and $L = (1, 1, 2)$. Now
\begin{equation*}
    \qx{E}{x}{L} = 1 - X^2, \qquad
    \qx{E}{y}{L} = X - X^3, \qquad
    \qx{E}{z}{L} = X^2 - 1.
\end{equation*}
If $h$ is the morphism defined by $h(x) = 1$, $h(y) = 2$ and $h(z) =
12$, then $h$ is a solution of $E$ and
\begin{equation*}
    \qx{E}{x}{L} \px{h(x)} + \qx{E}{y}{L} \px{h(y)}
        + \qx{E}{z}{L} \px{h(z)}
    = (1 - X^2) \cdot 1 + (X - X^3) \cdot 2 + (X^2 - 1) (1 + 2X)
    = 0.
\end{equation*}
\end{example}

A morphism $\phi: \Xi^* \to \Xi^*$ is an \emph{elementary
transformation}, if there are $x, y \in \Xi$ so that $\phi(y) \in
\{xy, x\}$ and $\phi(z) = z$ for $z \in \Xi \smallsetminus \{y\}$.
If $\phi(y) = xy$, then $\phi$ is \emph{regular}, and if $\phi(y) =
x$, then $\phi$ is \emph{singular}. The next lemma follows
immediately from results in \cite{Lo83}.

\begin{lemma} \label{lem:elemtrans}
Every solution $h$ of an equation $E$ has a factorization
\begin{math}
    h = \theta \circ \phi \circ \alpha,
\end{math}
where $\alpha(x) \in \{x, \eps\}$ for all $x \in \Xi$,
\begin{math}
    \phi = \phi_m \circ \dots \circ \phi_1,
\end{math}
every $\phi_i$ is an elementary transformation and $\phi \circ
\alpha$ is a solution of $E$. If $\alpha(x) = \eps$ for $s$ unknowns
$x$ and $t$ of the $\phi_i$ are singular, then the rank of $\phi
\circ \alpha$ is $n-s-t$.
\end{lemma}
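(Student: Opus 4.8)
The plan is to reduce the tuple of words $(h(x_1),\dots,h(x_n))$ to a free basis by the classical Nielsen-style reduction in free monoids, which is exactly the machinery of \cite{Lo83}, and then to read the morphisms $\alpha$, $\phi$ and $\theta$ off the reduction steps. I would argue by induction on the quantity $\mu(h)=\sum_{x\in\Xi}|h(x)|$, peeling one elementary transformation or one erasure off at each stage and composing the pieces at the end. The point of the induction is purely combinatorial and does not yet use the equation $E$.

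First I would dispose of the empty images: let $\alpha(x)=\eps$ for every $x$ with $h(x)=\eps$ and $\alpha(x)=x$ otherwise, so that $h=h_1\circ\alpha$ for the morphism $h_1$ that agrees with $h$ on the surviving unknowns and sends the erased ones to $\eps$, and $h_1$ is again a solution of $E$. From now on every image under consideration is nonempty. The reduction then uses two moves. If $h_1(x)=h_1(y)$ for two distinct surviving unknowns, I factor out a singular transformation $\phi_i$ with $\phi_i(y)=x$, merging $y$ into $x$; if $h_1(x)$ is a \emph{proper} prefix of $h_1(y)$, I factor out a regular transformation $\phi_i$ with $\phi_i(y)=xy$, replacing $h_1(y)$ by the shorter nonempty word obtained by deleting the prefix $h_1(x)$. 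Restricting the regular move to proper prefixes is what guarantees that no new empty image is ever produced, so that all erasures may legitimately be collected in the single front factor $\alpha$. Each move strictly decreases $\mu$, so the process terminates; when it halts, the images of the surviving unknowns are nonempty, pairwise distinct and pairwise prefix-incomparable, hence form a prefix code. Taking $\theta$ to be this terminal morphism and $\phi=\phi_m\circ\dots\circ\phi_1$ the composite of the transformations produced yields $h=\theta\circ\phi\circ\alpha$.

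To see that $\phi\circ\alpha$ is itself a solution of $E\colon u=v$, I would use that the words $\theta(x)$ for surviving $x$ form a code, so $\theta$ is injective on words over the surviving unknowns. Since $(\phi\circ\alpha)(u)$ and $(\phi\circ\alpha)(v)$ are such words and $\theta((\phi\circ\alpha)(u))=h(u)=h(v)=\theta((\phi\circ\alpha)(v))$, injectivity forces $(\phi\circ\alpha)(u)=(\phi\circ\alpha)(v)$. For the rank statement I would track the surviving unknowns through the construction: $\alpha$ removes the $s$ erased unknowns, each of the $t$ singular transformations merges away one further unknown, and the regular transformations remove none, so exactly $n-s-t$ unknowns survive, and every image of $\phi\circ\alpha$ is a word over this surviving alphabet; hence the rank of $\phi\circ\alpha$ is at most $n-s-t$. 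The matching lower bound asserts that these $n-s-t$ terminal code words form the basis of the free hull of $\{h(x)\}$, so the images cannot be expressed over any smaller set, and I would obtain it from the fact that the elementary reduction computes the rank exactly, which is the result of \cite{Lo83} being invoked.

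The factorization identities and the termination measure are routine. The two delicate points, and the ones I expect to be the real content, are the bookkeeping that lets all erasures be front-loaded into $\alpha$ without later elementary steps reintroducing empty images, which is handled by restricting the regular move to proper prefixes, and the rank \emph{lower} bound, namely that the surviving unknowns are genuinely independent so that fewer than $n-s-t$ generators cannot suffice. The latter is where the reduction theory of free monoids is essential, rather than mere counting, and it is the step I would expect to carry the weight of the argument.
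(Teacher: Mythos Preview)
Your proposal is correct and is precisely the content the paper is invoking: the paper gives no proof of its own for this lemma, merely stating that it ``follows immediately from results in \cite{Lo83}'', and what you have written is an accurate unpacking of that Nielsen-style reduction and defect-theorem machinery from Lothaire. Your identification of the two nontrivial points---front-loading the erasures by restricting regular steps to proper prefixes, and the rank lower bound via the free hull---is exactly right, and your deferral of the latter to \cite{Lo83} matches the paper's own deferral; one small remark is that the cleanest way to see $\mathrm{rank}(\phi\circ\alpha)=n-s-t$ is to note that $\theta$ is injective on $S^*$ (the surviving letters form a code under $\theta$), so $\mathrm{rank}(h)\le\mathrm{rank}(\phi\circ\alpha)\le n-s-t$, while Lothaire's defect theorem gives $\mathrm{rank}(h)=n-s-t$ because the terminal code is the basis of the free hull of $h(\Xi)$.
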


\begin{lemma} \label{lem:rdim}
Let $E: u = v$ be an equation on $n$ unknowns. Let $h: \Xi^* \to
\Sigma^*$ be a solution of length type $L$ that has rank $r$. There
is an $r$-dimensional subspace $V$ of $\Q^n$ such that $L \in V$ but
those length types of the solutions of $E$ of rank $r$ that are in
$V$ are not covered by any finite union of $(r-1)$-dimensional
spaces.
\end{lemma}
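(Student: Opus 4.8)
The plan is to reduce $h$ to a ``core'' solution over a small alphabet, inflate that core by arbitrary substitutions, and then count dimensions.

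First I would apply Lemma~\ref{lem:elemtrans} to $h$, writing $h = \theta \circ g_0$ with $g_0 := \phi \circ \alpha$ a solution of $E$; since $\theta$ cannot increase the rank we may choose the factorization so that $g_0$ has rank exactly $r$. Let $\Xi'$ be the set of unknowns occurring in $g_0(x_1),\dots,g_0(x_n)$ and let $G$ be the $n\times|\Xi'|$ matrix with $G_{i,\xi'} = |g_0(x_i)|_{\xi'}$ (the Parikh matrix of $g_0$). The one genuinely nontrivial input I need is
\[
\operatorname{rank}(G) = r .
\]
This is \emph{false} for arbitrary rank-$r$ solutions — e.g.\ $x_1\mapsto ab$, $x_2\mapsto ba$ has combinatorial rank $2$ but Parikh rank $1$ — so here one must exploit the structure of the elementary transformations, not merely that $h$ has rank $r$: the Parikh matrix of $g_0$ is $M_\phi M_\alpha$, where $M_\alpha$ is a coordinate projection of rank $n-s$ and $M_\phi$ is the product of the Parikh matrices of the $\phi_i$, each regular $\phi_i$ contributing a unimodular (hence invertible) matrix $I+E_{xy}$ and each singular $\phi_i$ a rank-$(n-1)$ ``coordinate merge''; a rank count then gives $\operatorname{rank}(M_\phi)=n-t$ and hence $\operatorname{rank}(G)=n-s-t=r$. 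This is the step I expect to take the most care.

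Next I would manufacture a large family of rank-$r$ solutions. For any non-erasing morphism $\psi\colon\Xi'^*\to\Sigma^*$, the morphism $h_\psi := \psi\circ g_0$ is again a solution of $E$, because $g_0(u)=g_0(v)$ forces $\psi(g_0(u))=\psi(g_0(v))$. If moreover $\psi(\Xi')$ is a code, then $h_\psi$ has rank exactly $r$: letting $B$ be the (code) base of the free hull of $\{g_0(x_i)\}$, which has size $r$ since $g_0$ has rank $r$, we have $g_0(x_i)\in B^*$, hence $h_\psi(x_i)\in\psi(B)^*$ with $\psi(B)$ a code of size $r$, and an injective-substitution argument on free hulls shows the rank does not drop. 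The length type of $h_\psi$ is $G\mathbf m$ with $\mathbf m=(|\psi(\xi')|)_{\xi'\in\Xi'}$, and $\mathbf m$ can be realized for every vector in a large set $T\subseteq\Ni^{|\Xi'|}$ — taking the coordinates of $\mathbf m$ large and pairwise distinct leaves enough room in $\Sigma$ to pick $\psi$ with those image-lengths and $\psi(\Xi')$ a code, and such a $T$ is not contained in any finite union of proper subspaces of $\Q^{|\Xi'|}$. Finally, $h = h_{\theta|_{\Xi'}}$, so $L = G\mathbf m_0$ for $\mathbf m_0=(|\theta(\xi')|)_{\xi'}$, whence $L\in\operatorname{Im}(G)$.

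Finally I would set $V := \operatorname{Im}(G)\subseteq\Q^n$, which has dimension $r$ by the rank computation and contains $L$. Each $h_\psi$ with $\mathbf m\in T$ is a rank-$r$ solution of $E$ whose length type $G\mathbf m$ lies in $V$. If these length types were covered by finitely many $(r-1)$-dimensional spaces $U_1,\dots,U_p$, then $T\subseteq\bigcup_j G^{-1}(U_j)$; since $\dim U_j = r-1 < r = \dim\operatorname{Im}(G)$, each $G^{-1}(U_j)$ is a \emph{proper} subspace of $\Q^{|\Xi'|}$, so $T$ would lie in a finite union of proper subspaces — impossible, as the product of their defining linear forms is a nonzero polynomial that cannot vanish on all of $T$. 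Hence the length types of rank-$r$ solutions of $E$ lying in $V$ are not covered by any finite union of $(r-1)$-dimensional spaces. The main obstacle, as flagged, is the identity $\operatorname{rank}(G)=r$: without the elementary-transformation normal form one cannot rule out a core of deficient Parikh rank (as already the $ab,ba$ example shows, and as the equation $x_3 = x_1x_2$ with $h(x_1)=ab,\ h(x_2)=ba,\ h(x_3)=abba$ shows more sharply, where one must select the \emph{right} core); the secondary point needing care is that a code substitution $\psi$ preserves the combinatorial rank, so that $h_\psi$ has rank exactly $r$.
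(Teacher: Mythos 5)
Your proposal follows the same route as the paper: decompose $h$ via Lemma~\ref{lem:elemtrans}, track the Parikh matrix of the reduced core through the elementary transformations (regular steps are invertible column operations, singular steps drop the rank by at most one), and generate new solutions of $E$ by post-composing the core with arbitrary morphisms. You also correctly identified the central subtlety (the raw Parikh matrix of $h$ is useless; one must use the normal form), and your endgame --- code substitutions to control the rank of $\psi\circ g_0$, and the preimage argument showing $T$ cannot lie in finitely many proper subspaces --- is actually more explicit than the paper, which stops after showing that at least $r$ of the columns of $A_m$ are linearly independent.

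However, there is a genuine gap at the step you yourself flagged as the crux. You need $\operatorname{rank}(G)=r$ \emph{exactly} (both to get $\dim V=r$ and to make $\psi(\Xi')$ a code of size $r$ so that $h_\psi$ has rank exactly $r$), and you justify choosing such a factorization by ``$\theta$ cannot increase the rank.'' That only yields $r\le n-s-t$; it does not produce a factorization with $n-s-t=r$, and none need exist. Take $E\colon x_1x_2=x_2x_1$ on four unknowns and $h(x_1)=h(x_2)=aab$, $h(x_3)=aba$, $h(x_4)=baa$. Here $h$ has combinatorial rank $2$, but the only nontrivial elementary reduction is the singular merge $x_2\mapsto x_1$: the residual morphism $\theta$ sends $x_1,x_3,x_4$ to $aab,aba,baa$, which admit no prefix relations and no equalities, so every factorization has $n-s-t=3>2=r$. (The drop from $3$ to $2$ happens inside $\theta$, because $\{aab,aba,baa\}$ is a code of combinatorial rank $2$, and such a drop is invisible to elementary transformations.) The same example shows your auxiliary claim that the base of the free hull of $g_0(\Xi)$ has size $r$ is false for general rank-$r$ morphisms. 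The paper avoids asserting equality --- it only proves ``at least $r$ independent columns,'' and is itself rather terse about how an exactly $r$-dimensional $V$ consisting of length types of rank-exactly-$r$ solutions is then extracted --- but your proof, by committing to $\operatorname{rank}(G)=r$, turns that vagueness into a concrete false step. To repair it you would have to handle the case $r<n-s-t$ separately, e.g.\ by restricting the substitutions $\psi$ to those factoring through a fixed rank-$r$ witness for $\theta$ and taking $V$ to be the span of the resulting length types.
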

\begin{proof}
Let
\begin{math}
    h = \theta \circ \phi_m \circ \dots \circ \phi_1 \circ \alpha
\end{math}
as in Lemma \ref{lem:elemtrans}. Let $f_k = \phi_k \circ \dots \circ
\phi_1 \circ \alpha$. Now $g \circ f_m$ is a solution of $E$ for
every morphism $g: \Xi^* \to \Sigma^*$. The length type of $g \circ
f_m$ is
\begin{equation} \label{eq:2dimsol}
    \sum_{i=1}^n |g(x_i)| \cdot (|f_m(x_1)|_{x_i}, \dots, |f_m(x_n)|_{x_i})
\end{equation}
To prove the theorem, we need to show that at least $r$ of the
vectors in this sum are linearly independent.

Let $A_k$ be the $n \times n$ matrix
\begin{math}
    (|f_k(x_i)|_{x_j}).
\end{math}
If there are $s$ unknowns $x$ such that $\alpha(x) = \eps$, then the
rank of $A_0$ is $n-s$. If $\phi_k$ is regular, then the matrix
$A_{k}$ is obtained from $A_{k-1}$ by adding one of the columns to
another column, so the ranks of these matrices are equal. If
$\phi_k$ is singular, then $A_{k}$ is obtained from $A_{k-1}$ by
adding one of the columns to another column and setting some column
to zero, so the rank of the matrix is decreased by at most one. If
$t$ of the $\phi_i$ are singular, then the rank of $A_m$ is at least
$n-s-t$. The rank of $f_m$ is $n-s-t$, so $r \leq n-s-t$ and at
least $r$ of the columns of $A_m$ are linearly independent.
\end{proof}

\begin{lemma} \label{lem:rdim2}
Let $E: u = v$ be an equation and $h: \Xi^* \to \Sigma^*$ be a
solution of length type $L$ that has rank $r$. There are morphisms
$f_m: \Xi^* \to \Xi^*$ and $g_m: \Xi^* \to \Sigma^*$ and polynomials
$p_{ij}$ such that the following conditions hold:
\begin{enumerate}
\item $h = g_m \circ f_m$,
\item $f_m$ is a solution of $E$,
\item $\px{(g \circ f_m)(x_i)} = \sum p_{ij} \px{g(x_j)}$ for all
    $i,j$, if $g: \Xi^* \to \Sigma^*$ is a morphism of the same
    length type as $g_m$,
\item $r$ of the vectors $(p_{1j}, \dots, p_{nj})$, where
    $j = 1, \dots, n$, are linearly independent.
\end{enumerate}
\end{lemma}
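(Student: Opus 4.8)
The plan is to build $f_m$ and $g_m$ directly from the factorization supplied by Lemma~\ref{lem:elemtrans}. Write $h = \theta \circ \phi_m \circ \cdots \circ \phi_1 \circ \alpha$ as in that lemma and put $f_m = \phi_m \circ \cdots \circ \phi_1 \circ \alpha$ and $g_m = \theta$. Then $f_m$ maps $\Xi^*$ into $\Xi^*$, $g_m$ maps $\Xi^*$ into $\Sigma^*$, condition~(1) holds by construction, and condition~(2) holds because $f_m = \phi \circ \alpha$ is a solution of $E$.

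For condition~(3), write each $f_m(x_i)$ as a word $y_{i,1} \cdots y_{i,k_i}$ over $\Xi$. For an arbitrary morphism $g$, equation~\eqref{eq:prodp} applied to $(g \circ f_m)(x_i) = g(y_{i,1}) \cdots g(y_{i,k_i})$ gives
\begin{equation*}
    \px{(g \circ f_m)(x_i)} = \sum_{t=1}^{k_i} \px{g(y_{i,t})}\, X^{|g(y_{i,1} \cdots y_{i,t-1})|}.
\end{equation*}
The exponent depends on $g$ only through its length type, so if $g$ has the same length type $L_m$ as $g_m$ we may replace each $|g(\cdot)|$ by $\len{L_m}(\cdot)$; collecting the terms with $y_{i,t} = x_j$ then shows that condition~(3) holds with
\begin{equation*}
    p_{ij} = \sum_{t\,:\,y_{i,t} = x_j} X^{\len{L_m}(y_{i,1} \cdots y_{i,t-1})} .
\end{equation*}

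It remains to prove condition~(4), and here the idea is to specialize $X = 1$ and reuse the rank computation from the proof of Lemma~\ref{lem:rdim}. Every summand of $p_{ij}$ is a power of $X$, so $p_{ij}\big|_{X=1} = |f_m(x_i)|_{x_j}$; in other words the polynomial matrix $(p_{ij})$ specializes at $X=1$ to the matrix $A_m$ of that proof, which has at least $r$ linearly independent columns. So it is enough to note that polynomial vectors whose specializations at $X=1$ are linearly independent over $\Q$ are themselves linearly independent over $\Q(X)$: from a nontrivial rational dependence among them, clear denominators and cancel the greatest common divisor of the resulting polynomial coefficients; those coefficients are then not all divisible by $X-1$, so setting $X=1$ gives a nontrivial dependence among the specialized vectors --- a contradiction. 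Applying this to the $r$ independent columns of $A_m$ and the corresponding columns $(p_{1j}, \dots, p_{nj})$ of $(p_{ij})$ gives condition~(4).

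The construction and conditions~(1)--(3) are little more than bookkeeping on top of Lemmas~\ref{lem:elemtrans} and~\ref{lem:rdim} and the product formula~\eqref{eq:prodp}; the one step that needs a genuine (if small) argument is the final lifting of linear independence from the integer matrix $A_m = (p_{ij})\big|_{X=1}$ to the polynomial matrix $(p_{ij})$.
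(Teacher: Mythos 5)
Your construction of $f_m$, $g_m$ and the explicit polynomials $p_{ij}$ coincides with the paper's, and conditions (1)--(3) are handled in essentially the same way; the genuine divergence is in condition (4). The paper re-runs, on the polynomial matrices $B_k = (p_{ijk})$ attached to each intermediate factorization $h = g_k \circ f_k$, the same column-operation argument it used for the integer matrices $A_k$ in Lemma~\ref{lem:rdim}: each elementary transformation adds one column to another and multiplies a column by a nonzero power of $X$ (a singular one additionally zeroes a column), so the rank of $B_m$ is at least $n-s-t \geq r$. You instead observe that $(p_{ij})$ specializes at $X=1$ to the matrix $A_m$ of Lemma~\ref{lem:rdim} (since $p_{ij}\big|_{X=1} = |f_m(x_i)|_{x_j}$) and lift the linear independence already established there, via the clear-denominators, divide-by-the-gcd argument; this is a correct proof of the standard fact that the rank of a matrix over $\Q[X]$, taken over $\Q(X)$, is at least the rank of any of its evaluations. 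Both routes are valid. Yours reuses the work of Lemma~\ref{lem:rdim} instead of repeating it and replaces the somewhat delicate bookkeeping of how the columns of $B_k$ evolve under each $\phi_k$ with a one-step specialization; the paper's version is self-contained at the level of the $B_k$ and exhibits the bound $n-s-t$ directly, but your argument is, if anything, the cleaner of the two.
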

\begin{proof}
Let $f_k$ be as in the proof of Lemma \ref{lem:rdim} and let $g_k$
be such that $h = g_k \circ f_k$. For every $k$, there are
polynomials $p_{ijk}$ so that
\begin{math}
    \px{h(x_i)} = \sum_{j=1}^n p_{ijk} \px{g_k(x_j)}
\end{math}
for all $i \in \{1, \dots, n\}$ ($p_{ijk}$ ``encodes'' the positions
of the word $g_k(x_j)$ in $h(x_i)$). Let $B_k$ be the $n \times n$
matrix
\begin{math}
    (p_{ijk}).
\end{math}
The matrix $B_{k+1}$ is obtained from $B_k$ by adding one of the
columns to another column, and multiplying some column with a
polynomial. Like in Lemma \ref{lem:rdim}, we conclude that at least
$n-s-t$ of the columns of $B_m$ are linearly independent and $r \leq
n-s-t$. If we let $p_{ij} = p_{ijm}$, then the four conditions hold.
\end{proof}

With the help of these lemmas, we are going to analyze solutions of
some fixed length type. Fundamental solutions (which were implicitly
present in the previous lemmas, see \cite{Lo83}) have been used in
connection with fixed lengths also in \cite{Ho01} and \cite{Ho00}.


\begin{theorem} \label{thm:rank}
Let $E_1, \dots, E_{m}$ be a system of equations on $n$ unknowns and
let $L \in \No^n$. Let
\begin{math}
    q_{ij} = \qx{E_i}{x_j}{L}.
\end{math}
If the system has a solution of length type $L$ that has rank $r$,
then the rank of the $m \times n$ matrix $(q_{ij})$ is at most
$n-r$. If the rank of the matrix is 1, at most one component of $L$
is zero and the equations are nontrivial, then they have the same
solutions of length type $L$.
\end{theorem}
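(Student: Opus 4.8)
In both parts the work is done by Theorem~\ref{thm:weqpeq}. Writing $P_h=(\px{h(x_1)},\dots,\px{h(x_n)})^{\mathsf T}$ (and $P_{g'}$ analogously) and viewing $Q=(q_{ij})$ as a matrix over the field $\Q(X)$, the theorem says that a morphism $h$ of length type $L$ is a solution of the whole system if and only if $QP_h=0$, and is a solution of the single equation $E_i$ if and only if the $i$-th row of $Q$ annihilates $P_h$. Throughout we may assume $|\Sigma|\ge 2$, the one-letter case being degenerate.

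\emph{The rank bound.} Fix a solution $h$ of length type $L$ and rank $r$ and apply Lemma~\ref{lem:rdim2}. Since the factorization $h=g\circ f$ it produces depends only on $h$, the morphism $f$ is simultaneously a solution of every $E_i$, and the $n\times n$ polynomial matrix $B=(p_{ij})$ has at least $r$ linearly independent columns. For every morphism $g'$ with the same length type as $g$, the composite $g'\circ f$ is a solution of every $E_i$ and still has length type $L$; so Theorem~\ref{thm:weqpeq} together with the third condition of Lemma~\ref{lem:rdim2} gives $(QB)\,P_{g'}=0$. Now I let $g'$ vary over all such morphisms: using $|\Sigma|\ge 2$, the vectors $P_{g'}$ span over $\Q$ every basis vector $e_k$ with $|g(x_k)|\ge 1$, and hence the $k$-th column of $QB$ — that is, $Q$ applied to the $k$-th column of $B$ — vanishes for each such $k$. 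The columns of $B$ with index $k$ such that $|g(x_k)|=0$ are themselves zero, as one sees from the construction in the proof of Lemma~\ref{lem:rdim2}; therefore the $r$ independent columns of $B$ all have indices with $|g(x_k)|\ge 1$ and span an $r$-dimensional subspace of $\ker Q$. Thus $\dim\ker Q\ge r$, i.e.\ $\operatorname{rank} Q\le n-r$.

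\emph{Equal solution sets when the rank is $1$.} If $\operatorname{rank} Q=1$, every row of $Q$ is $c_i v$ for one fixed nonzero row $v$ over $\Q(X)$. For a morphism $h$ of length type $L$ we then have: $h$ solves $E_i$ iff $c_i\,(v\cdot P_h)=0$. So as long as every $c_i$ is nonzero, ``$h$ solves $E_i$'' is equivalent to ``$v\cdot P_h=0$'' uniformly in $i$, and all the $E_i$ have the same solutions of length type $L$ — the assertion. It remains to see that under the hypotheses no row of $Q$ is zero. Write the $i$-th equation as $y_1\cdots y_p=z_1\cdots z_q$ and set $a_l=\len{L}(y_1\cdots y_{l-1})$, $b_l=\len{L}(z_1\cdots z_{l-1})$; its row is zero precisely when the polynomial $\sum_l X^{a_l}T_{y_l}-\sum_l X^{b_l}T_{z_l}\in\Q[X,T_1,\dots,T_n]$ is zero, i.e.\ when its two multisets of monomials coincide. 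If no component of $L$ is zero, the exponents $a_l$ (resp.\ $b_l$) strictly increase, so matching the unique monomial with the smallest exponent forces $y_1=z_1$, then $a_2=b_2$ forces $y_2=z_2$, and so on, giving $u=v$ and contradicting nontriviality. If exactly one component $\len{L}(x_{k_0})$ is zero, the exponents attached to the letters $\ne x_{k_0}$ still strictly increase along each word, so the same matching shows that the two sides of $E_i$ carry the same subsequence of non-$x_{k_0}$ letters; and since every $x_{k_0}$ sitting in a given gap between two consecutive such letters carries one and the same exponent — and these exponents are pairwise distinct from gap to gap — the counts of $x_{k_0}$'s in each gap agree as well, so once more $u=v$, a contradiction.

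The routine linear algebra aside, the one step needing real care is the last: that a nontrivial equation together with a length type having at most one zero component forces a nonzero row. This is exactly where those two hypotheses are used (for instance $x_1x_2=x_2x_1$ with $L=(0,0)$ yields the zero row), and it comes down to reading a word off from the multiset $\{X^{a_l}T_{y_l}\}$, which is intuitively clear but a little delicate to pin down. A smaller gap to fill, in the rank bound, is the claim that the columns of $B$ indexed by erased variables of $g$ vanish, so that the independent columns furnished by Lemma~\ref{lem:rdim2} really do survive the restriction to the indices $k$ with $|g(x_k)|\ge 1$.
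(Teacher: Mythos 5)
Your second part (the rank-one claim) is correct, and in fact supplies the multiset-matching argument that the paper only asserts in one line (``all rows of the matrix are nonzero''); your treatment of the single zero component is exactly the delicate point and you handle it properly. The genuine gap is in the first part, at the sentence ``Since the factorization $h=g\circ f$ it produces depends only on $h$, the morphism $f$ is simultaneously a solution of every $E_i$.'' Lemma~\ref{lem:rdim2} is stated and proved for a \emph{single} equation: the sequence of elementary transformations is chosen according to that equation (which unknowns head the two sides at each stage), so the morphism $f_m$ it outputs is only guaranteed to solve that equation. A concrete failure: take $n=4$, $E_1: x_1x_2=x_2x_1$, $E_2: x_3x_1=x_1x_4$, and $h(x_1)=h(x_2)=a$, $h(x_3)=ab$, $h(x_4)=ba$, a common solution of rank $2$. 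Applying Lemma~\ref{lem:elemtrans} to $E_1$ one takes $\alpha=\mathrm{id}$ and the single singular transformation $x_1\mapsto x_2$, giving $f$ with $f(x_3x_1)=x_3x_2\ne x_2x_4=f(x_1x_4)$; so $f$ solves $E_1$ but not $E_2$, $g'\circ f$ need not solve the system, and the identity $(QB)P_{g'}=0$ fails for the rows of $Q$ coming from $E_2$. So the step does not merely lack justification; it is false as stated.

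The paper closes exactly this hole by a move you omit: it first replaces the system by a single equation $E$ having the same nonperiodic solutions as the whole system (a known reduction), and applies Lemma~\ref{lem:rdim2} to that $E$. The resulting $f_m$ has rank $n-s-t\ge r\ge 2$, hence is a nonperiodic solution of $E$ and therefore a solution of every $E_l$, so $g'\circ f_m$ solves the whole system for every $g'$ and the linear algebra goes through. With that repair, the rest of your first part matches the paper's proof, your ``difference of two morphisms'' trick playing the role of the paper's explicit morphism onto $\{0,1\}$ with $\px{0^j}=0$. Two smaller points: the case $r=1$ must then be handled separately (as the paper does, using that $P_h$ itself is a nonzero kernel vector), and the issue you flag yourself --- whether the columns of $B$ indexed by unknowns with $g_m(x_k)=\eps$ can be discarded --- is real but minor by comparison.
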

\begin{proof}
Let $h$ be a solution of length type $L$ that has rank $r$. If
$r=1$, the first claim follows from Theorem \ref{thm:weqpeq}, so
assume that $r > 1$. Let $E$ be an equation that has the same
nonperiodic solutions as the system. We will use Lemma
\ref{lem:rdim2} for this equation. Fix $k$ and let $g: \Xi^* \to
\Sigma^*$ be the morphism determined by
\begin{math}
    g(x_k) = 10^{|g_m(x_k)|-1}
\end{math}
and
\begin{math}
    g(x_i) = 0^{|g_m(x_i)|}
\end{math}
for all $i \ne k$ (we assumed earlier that $0 \notin \Sigma$, but it
does not matter here). Then $g \circ f_m$ is a solution of every
$E_l$,
\begin{math}
    \px{(g \circ f_m)(x_i)} = \sum_{j=1}^n p_{ij} \px{g(x_j)}
\end{math}
and
\begin{equation*}
    0 = \sum_{i=1}^n \qx{E_l}{x_i}{L} \sum_{j=1}^n p_{ij} \px{g(x_j)}
    = \sum_{i=1}^n \qx{E_l}{x_i}{L} p_{ik}
\end{equation*}
for all $l$ by Theorem \ref{thm:weqpeq}. Thus the vectors $(p_{1j},
\dots, p_{nj})$ are solutions of the linear system of equations
determined by the matrix $(q_{ij})$. Because at least $r$ of these
vectors are linearly independent, the rank of the matrix is at most
$n-r$.

If at most one component of $L$ is zero and the equations are
nontrivial, then all rows of the matrix are nonzero. If also the
rank of the matrix is 1, then all rows are multiples of each other
and the second claim follows by Theorem \ref{thm:weqpeq}.
\end{proof}

\section{Applications} \label{sect:appl}

The \emph{graph} of a system of word equations is the graph, where
$\Xi$ is the set of vertices and there is an edge between $x$ and
$y$, if one of the equations in the system is of the form $x \dd = y
\dd$. The following well-known theorem can be proved with the help
of Theorem \ref{thm:rank}.

\begin{theorem}[Graph Lemma] \label{thm:graph}
Consider a system of equations whose graph has $r$ connected
components. If $h$ is a solution of this system and $h(x_i) \ne
\eps$ for all $i$, then $h$ has rank at most $r$.
\end{theorem}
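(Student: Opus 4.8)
The plan is to derive the Graph Lemma from Theorem~\ref{thm:rank} by exhibiting, for a suitable length type, a matrix $(q_{ij})$ whose rank is at least $n-r$. First I would fix a solution $h$ with $h(x_i)\ne\eps$ for all $i$ and let $L=(|h(x_1)|,\dots,|h(x_n)|)$ be its length type; every component of $L$ is then positive, so in particular at most one (indeed none) is zero. Let $s$ denote the rank of $h$. Theorem~\ref{thm:rank} tells us the rank of the $m\times n$ matrix $(q_{ij})$, with $q_{ij}=\qx{E_i}{x_j}{L}$, is at most $n-s$. So it suffices to show that this matrix has rank at least $n-r$, i.e.\ its nullspace has dimension at most $r$; combining the two inequalities gives $s\le r$.

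The key step is to understand the nullspace of $(q_{ij})$ in terms of the graph. The crucial observation is the ``column-sum'' identity: for any single equation $E:y_1\dots y_k=z_1\dots z_l$ one has $\sum_{x\in\Xi}\qx{E}{x}{L}=\sum_i X^{\len{L}(y_1\dots y_{i-1})}-\sum_i X^{\len{L}(z_1\dots z_{i-1})}$, and if $|h(u)|=|h(v)|$ (which holds since $h$ is a solution, forcing $\len{L}(u)=\len{L}(v)$) the two sums telescope against each other only partially — more usefully, evaluating $\qx{E}{x}{L}$ at $X=1$ gives $|u|_x-|v|_x$, and since both sides of $E$ have equal $\len{L}$-length, $\sum_x (|u|_x-|v|_x)\len{L}(x)=0$. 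But the sharper fact I want is about which variables appear: if $E$ is of the form $x\dd=y\dd$ then the leading term of $\qx{E}{x}{L}$ and of $\qx{E}{y}{L}$ is $\pm 1$ (constant term), tying $x$ and $y$ together. Concretely, I would instead argue on the row space: each row of $(q_{ij})$, coming from an equation $x\dd=y\dd$, has its constant-term vector supported (with a $+1$ and a $-1$) exactly on the pair $\{x,y\}$ — this is the incidence vector of an edge of the graph. Hence the constant-term parts of the rows span a subspace of the ``edge space'' of the graph, whose orthogonal complement inside $\Q^n$ is exactly the span of the indicator vectors of the $r$ connected components, of dimension $r$. Therefore any vector in the nullspace of the full matrix $(q_{ij})$, when we read off constant terms, must be constant on each connected component; this pins down the nullspace to an $r$-dimensional space, so $\mathrm{rank}(q_{ij})\ge n-r$.

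Putting it together: $n-r\le \mathrm{rank}(q_{ij})\le n-s$, hence $s\le r$, which is exactly the claim that $h$ has rank at most $r$.

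The main obstacle I anticipate is making the nullspace argument airtight rather than merely suggestive. It is easy to see that the \emph{constant-term} components of a nullspace vector must be orthogonal to every edge incidence vector, hence constant on connected components; the subtlety is showing that a full polynomial-valued nullspace vector cannot have \emph{more} than $r$ dimensions of freedom — i.e.\ that a nonzero polynomial solution necessarily has a nonzero constant-term part detecting it, or equivalently that one can reduce to the constant-term evaluation without losing rank. One clean way around this is to evaluate the entire matrix $(q_{ij})$ at $X=1$: the resulting integer matrix has rank at most that of $(q_{ij})$, and its rows are precisely the edge incidence vectors scaled appropriately, so its rank equals $n-r$ exactly; but that gives the inequality the wrong way. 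The honest fix is to observe that if $v=(v_1,\dots,v_n)$ with $v_j\in\Q[X]$ satisfies $\sum_j q_{ij}v_j=0$ for all $i$, then looking at the lowest-degree coefficient appearing across the $v_j$ gives a nonzero \emph{constant} solution of the $X=1$ system unless all $v_j$ vanish — so the polynomial nullspace and the scalar nullspace of the evaluated matrix have the same dimension $r$. Spelling out this last reduction carefully is the part of the proof that needs genuine care; everything else is bookkeeping with Theorem~\ref{thm:rank} and elementary graph-theoretic linear algebra.
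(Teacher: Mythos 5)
Your overall strategy coincides with the paper's: apply Theorem \ref{thm:rank} to the length type of $h$ and show that the matrix $(q_{ij})$ has rank at least $n-r$. Your key observation --- that because $h(x_i)\ne\eps$ for all $i$, the constant term of $\qx{E}{x_j}{L}$ records whether $x_j$ is the first letter of a side of $E$, so the constant-term parts of the rows are the signed incidence vectors of the edges of the graph --- is also the engine of the paper's argument. The execution differs: the paper selects a spanning forest of $n-r$ equations, orders the unknowns so that the corresponding $(n-r)\times(n-r)$ minor $M$ is triangular modulo $X$ with units on the diagonal, and concludes $\det(M)\not\equiv 0 \pmod X$; you work with the whole matrix and invoke the standard fact that the incidence matrix of a graph with $r$ connected components has rank $n-r$. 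Yours is the more conceptual packaging; the paper's version sidesteps any discussion of specialization by exhibiting one explicit nonzero minor.

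Two points in your write-up need repair, though. First, the references to evaluating at $X=1$ are wrong: $\qx{E}{x_j}{L}$ evaluated at $X=1$ equals $|u|_{x_j}-|v|_{x_j}$, which is not an edge incidence vector and vanishes identically for balanced equations. The evaluation you actually want is at $X=0$ (the constant term), and your complaint that this ``gives the inequality the wrong way'' is backwards: specialization can only decrease rank, so $\mathrm{rank}(q_{ij})\ge\mathrm{rank}\bigl((q_{ij})|_{X=0}\bigr)=n-r$ is exactly the direction you need, and it finishes the proof immediately (a nonzero $(n-r)\times(n-r)$ minor of the specialized matrix lifts to a minor of $(q_{ij})$ whose determinant is nonzero at $X=0$, hence nonzero as a polynomial). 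Second, the ``lowest-degree coefficient'' nullspace argument you substitute for this is the step you yourself flag as shaky, and as written it only shows that each nonzero polynomial nullspace vector projects to a nonzero vector of the scalar nullspace; to deduce the dimension bound you still need the rank-under-specialization fact (or a basis/elimination argument). So: the idea is right and essentially the paper's, but it is only complete once the $X=1$/$X=0$ confusion is fixed and the specialization lemma is stated and used cleanly.
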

\begin{proof}
We can assume that the connected components are
\begin{equation*}
    \{x_1, \dots, x_{i_2-1}\},
    \{x_{i_2}, \dots, x_{i_3-1}\},
    \dots,
    \{x_{i_r}, \dots, x_n\}
\end{equation*}
and the equations are
\begin{equation*}
    x_j \dd = x_{k_j} \dd,
\end{equation*}
where $j \in \{1, \dots, n\} \smallsetminus \{1, i_2, \dots, i_r\}$
and $k_j < j$. Let $q_{ij}$ be as in Theorem \ref{thm:rank}. If we
remove the columns $1, i_2, \dots, i_r$ from the $(n-r) \times n$
matrix $(q_{ij})$, we obtain a square matrix $M$, where the diagonal
elements are not divisible by $X$, but all elements above the
diagonal are divisible by $X$. This means that $\det(M)$ is not
divisible by $X$, so $\det(M) \ne 0$. Thus the rank of the matrix
$(q_{ij})$ is $n-r$ and $h$ has rank at most $r$ by Theorem
\ref{thm:rank}.
\end{proof}

The next theorem generalizes a result from \cite{CzKa07} for more
than three unknowns.

\begin{theorem}
If a pair of nontrivial equations on $n$ unknowns has a solution $h$
of rank $n-1$, where no two of the unknowns commute, then there is a
number $k \geq 1$ such that the equations are of the form
\begin{math}
    x_1 \dd = x_2^k x_3 \dd .
\end{math}
\end{theorem}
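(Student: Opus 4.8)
The starting point is Theorem~\ref{thm:rank}: since the pair of equations $E_1, E_2$ has a solution $h$ of rank $n-1$ and length type $L$ (with all components of $L$ positive, because no two unknowns commute forces in particular $h(x_i)\ne\eps$ for all $i$), the $2\times n$ matrix $(q_{ij})$ with $q_{ij}=\qx{E_i}{x_j}{L}$ has rank at most $n-(n-1)=1$. It cannot have rank $0$, since the equations are nontrivial and no component of $L$ vanishes (as in the last paragraph of the proof of Theorem~\ref{thm:rank}), so its rank is exactly~$1$; hence the two rows are nonzero polynomial-vector multiples of each other, and by the second assertion of Theorem~\ref{thm:rank} the two equations have the same solutions of length type $L$. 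The idea is then to extract, from the rank-$1$ condition together with the hypothesis that no two of the $h(x_i)$ commute, very tight constraints on the combinatorial shape of $E_1$ and $E_2$.

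The first step is to analyze a single nontrivial equation $E$ on $n$ unknowns having a rank-$(n-1)$ solution in which no two letters commute. Writing $E: y_1\cdots y_k = z_1\cdots z_\ell$, the polynomial identity $\sum_x \qx{E}{x}{L}\px{h(x)}=0$ from Theorem~\ref{thm:weqpeq}, combined with the fact that the $\px{h(x)}$ are "generic" (the $h(x_i)$ generate a rank-$(n-1)$ subsemigroup and pairwise do not commute, so by Theorem~\ref{thm:commutation} no $\px{h(x_i)}/(X^{|h(x_i)|}-1)$ equals another), should force each $\qx{E}{x}{L}$ individually to be a multiple of a common polynomial factor; after dividing out, the supports of the $\qx{E}{x}{L}$ must be "aligned" in a way that, reading the equation left to right, pins down $y_1$, then $z_1$, and so on. The conclusion I am aiming at is that $E$ must look like $x_a\cdots = x_b^j x_c\cdots$ (or a trivial shift of this), i.e. a single letter on the left matches a power block on the right. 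Normalizing by relabeling, $y_1=x_1$ and the right side begins with $x_2^k x_3$ for some $k\ge 1$; this is the "one equation" version of the claim.

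The second step is to run this analysis for both $E_1$ and $E_2$ and then use the rank-$1$ relation between their $q$-rows to force them to have the \emph{same} leading structure: the same first letter $x_1$ on the left, and the same block $x_2^k x_3$ at the start of the right-hand side, with the \emph{same} exponent $k$ (the exponent is read off from the exponent of $X$ at which the support of $\qx{E_i}{x_2}{L}$ or $\qx{E_i}{x_3}{L}$ sits, and equality of the normalized rows transfers it). This gives both equations the form $x_1\cdots = x_2^k x_3\cdots$ as required. I expect the main obstacle to be the genericity argument in step one: justifying rigorously that "no two $h(x_i)$ commute and $h$ has rank $n-1$" forces the polynomials $\px{h(x_i)}$ to be linearly independent enough over $\Q[X]$ that a vanishing combination $\sum \qx{E}{x_i}{L}\px{h(x_i)}=0$ with prescribed-support coefficients $\qx{E}{x_i}{L}$ can only happen in the "aligned-prefix" way; this likely needs a careful induction on the length of the equation, peeling off one letter at a time from the left and using that the lowest-degree term of each $\px{h(x_i)}$ is a nonzero constant. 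Once that structural lemma about a single equation is in place, combining the two equations via Theorem~\ref{thm:rank} is routine bookkeeping.
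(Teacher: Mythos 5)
Your overall two-step architecture --- first extract the combinatorial shape of a single equation from the existence of the rank-$(n-1)$ noncommuting solution $h$, then use the rank-$1$ condition on the $2\times n$ matrix $(q_{ij})$ from Theorem \ref{thm:rank} to transfer that shape (and the exponent $k$) to the second equation --- does match the paper's proof in outline, and invoking Theorem \ref{thm:rank} at the start and lowest-degree/support information for the transfer is the right instinct. But there is a genuine gap exactly where you say you expect one: the single-equation structural lemma is never proved, and the route you sketch for it (``genericity'' of the $\px{h(x_i)}$, support alignment, peeling off one letter at a time) does not contain the ideas that make it work. A single linear relation $\sum_x \qx{E}{x}{L}\px{h(x)}=0$ among $n$ polynomials does not force the coefficients to share a common factor or to be aligned in the sense you describe, and comparing lowest-degree terms only tells you that position $0$ is occupied by one unknown on each side; it gives no handle on what happens between position $0$ and the first occurrence of a letter outside $\{x_1,x_2\}$ on the right-hand side. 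The paper obtains the block $x_2^k$ from two combinatorial inputs your plan omits entirely: the Graph Lemma (Theorem \ref{thm:graph}), which forces both equations to begin $x_1\dd = x_2\dd$ with the \emph{same} pair of unknowns (otherwise the graph of the pair would have too few connected components to admit a rank-$(n-1)$ solution), and the theorem of Fine and Wilf (Theorem \ref{thm:finewilf}), which shows that if the maximal $\{x_1,x_2\}$-prefix $x_2v$ of the right-hand side contained an occurrence of $x_1$, then $h(x_1)$ and $h(x_2)$ would commute, whence $v=x_2^{k-1}$. Nothing in a pure support analysis rules out a right-hand side beginning, say, $x_2x_1x_2x_3$; the periodicity argument is indispensable there.

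For the second step the paper's mechanism is concrete: with $E_1: x_1u y\dd = x_2^k x_3\dd$ and $E_2: x_1u'y'\dd = x_2v'z'\dd$, the vanishing of the minor $q_{12}q_{23}-q_{13}q_{22}$ forces the lowest-degree term of $q_{23}$ to be $-X^{|h(x_2^k)|}$, which pins down $z'=x_3$ and $|h(x_2v')|=|h(x_2^k)|$, after which the same prefix-comparison and Fine--Wilf argument gives $v'=x_2^{k-1}$. Your description of the transfer is compatible with this, but it rests on the unestablished step one. As written, the proposal is a plan with the hard part deferred, not a proof.
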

\begin{proof}
By Theorem \ref{thm:graph}, the equations must be of the form $x_1
\dd = x_2 \dd$. Let them be
\begin{equation*}
    x_1 u y \dd = x_2 v z \dd
    \qquad \text{and} \qquad
    x_1 u' y' \dd = x_2 v' z' \dd,
\end{equation*}
where $u, v, u', v' \in \{x_1, x_2\}^*$ and $y, z, y', z' \in \{x_3,
\dots, x_n\}$. We can assume that $z = x_3$ and
\begin{math}
    |h(x_2 v)| \leq |h(x_1 u)|, |h(x_1 u')|, |h(x_2 v')|.
\end{math}
If it would be $|h(x_1 u)| = |h(x_2 v)|$, then $h(x_1)$ and $h(x_2)$
would commute, so $|h(x_1 u)| > |h(x_2 v)|$. If $v$ would contain
$x_1$, then $h(x_1)$ and $h(x_2)$ would commute by Theorem
\ref{thm:finewilf}, so $v = x_2^{k-1}$ for some $k \geq 1$.

Let $L$ be the length type of $h$ and let $q_{ij}$ be as in Theorem
\ref{thm:rank}. By Theorem \ref{thm:rank}, the rank of the matrix
$(q_{ij})$ must be 1 and thus
\begin{math}
    q_{12} q_{23} - q_{13} q_{22} = 0.
\end{math}
The term of
\begin{math}
    q_{13} q_{22}
\end{math}
of the lowest degree is $X^{|h(x_2^k)|}$. The same must hold for
\begin{math}
    q_{12} q_{23},
\end{math}
and thus the term of $q_{23}$ of the lowest degree must be
$-X^{|h(x_2^k)|}$. This means that $|h(x_2 v')| = |h(x_2^k)| \leq
|h(x_1 u')|$ and  $z' = x_3$. As above, we conclude that $|h(x_2
v')| < |h(x_1 u')|$, $v'$ cannot contain $x_1$ and $v' = x_2^{k-1}$.
\end{proof}

It was proved in \cite{Ko98} that if
\begin{equation*}
      s_0 u_1^i s_1 \dots u_m^i s_m
    = t_0 v_1^i t_1 \dots v_n^i t_n
\end{equation*}
holds for $m+n+3$ consecutive values of $i$, then it holds for all
$i$. By using similar ideas as in Theorem \ref{thm:rank}, we improve
this bound to $m+n$ and prove that the values do not need to be
consecutive. In \cite{Ko98} it was also stated that the
arithmetization and matrix techniques in \cite{Tu87} would give a
simpler proof of a weaker result. Similar questions have been
studied in \cite{HoKo07} and there are relations to independent
systems \cite{Pl03}.

\begin{theorem}
Let $m,n \geq 1$, $s_j, t_j \in \Sigma^*$ and $u_j, v_j \in
\Sigma^+$. Let
\begin{math}
    U_i = s_0 u_1^i s_1 \dots u_m^i s_m
\end{math}
and
\begin{math}
    V_i = t_0 v_1^i t_1 \dots v_n^i t_n .
\end{math}
If $U_i = V_i$ holds for $m+n$ values of $i$, then it holds for all
$i$.
\end{theorem}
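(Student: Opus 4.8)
The plan is to pass to polynomials via $w\mapsto\px w$ and to observe that, as a function of $i$, the polynomial $\px{U_i}$ is a \emph{fixed} $\Q(X)$-linear combination of only $m+1$ ``exponential'' terms $X^{iD}$; the relation $U_i=V_i$ then becomes a linear-algebra statement about such combinations, to which a Vandermonde argument applies.

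First I would use the injectivity of $w\mapsto\px w$ to rewrite $U_i=V_i$ as $\px{U_i}=\px{V_i}$. Expanding $\px{U_i}$ by \eqref{eq:prodp}, substituting $\px{u_j^i}=\px{u_j}(X^{i|u_j|}-1)/(X^{|u_j|}-1)$, and then grouping the powers of $X$ according to the part of the exponent that is proportional to $i$, one obtains
\[
\px{U_i}=\sum_{j=0}^{m}A_j X^{iD_j},\qquad D_0=0,\ D_j=|u_1\cdots u_j|\ (1\le j\le m),
\]
where the coefficients $A_j\in\Q(X)$ do not depend on $i$; symmetrically $\px{V_i}=\sum_{j=0}^{n}B_j X^{iE_j}$ with $E_0=0$ and $E_n=|v_1\cdots v_n|$. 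Hence $i\mapsto\px{U_i}-\px{V_i}$ is a $\Q(X)$-linear combination of the sequences $(X^{iG})_i$ with $G$ ranging over the set $\{D_0,\dots,D_m\}\cup\{E_0,\dots,E_n\}$.

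The crucial point is that this set has at most $m+n$ elements, not $m+n+1$. Indeed $D_0=E_0=0$; moreover, since we are assuming $U_i=V_i$ for $m+n\ge 2$ values of $i$, comparing word lengths at two of them forces $|u_1\cdots u_m|=|v_1\cdots v_n|$, that is $D_m=E_n$, and this common value is positive. So the two lists share exactly the entries $0$ and $D_m$, leaving distinct exponents $0=G_1<\dots<G_N$ with $N\le m+n$. Writing $\px{U_i}-\px{V_i}=\sum_{k=1}^{N}C_kX^{iG_k}$ with $C_k\in\Q(X)$ independent of $i$ and evaluating at $N$ of the given values $i_1,\dots,i_N$, we get $\sum_{k}C_k(X^{G_k})^{i_s}=0$ for $s=1,\dots,N$. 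The coefficient matrix $\bigl((X^{G_k})^{i_s}\bigr)_{s,k}$ is a generalized Vandermonde matrix whose determinant is a nonzero element of $\Q(X)$ — for instance because at any real $X>1$ the $X^{G_k}$ are distinct positive reals and such a determinant is then nonzero. Therefore all $C_k=0$, so $\px{U_i}=\px{V_i}$ for every $i$, and thus $U_i=V_i$ for all $i$.

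I expect the routine-but-fiddly part to be the expansion of $\px{U_i}$ and the precise bookkeeping of which monomials carry an $i$-dependent exponent, while the genuinely decisive observation is that the highest exponents $D_m$ and $E_n$ must coincide: this is exactly what brings the bound down from the naive $m+n+1$ to the claimed $m+n$. The nonvanishing of the generalized Vandermonde determinant is classical (up to the ordinary Vandermonde factor it is a Schur polynomial, hence a sum of monomials with positive coefficients), so that step should cause no trouble.
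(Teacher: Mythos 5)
Your proposal is correct and follows essentially the same route as the paper: expand $\px{U_i}-\px{V_i}$ as a fixed linear combination of the terms $X^{iG}$ with $G$ ranging over the partial-sum lengths, use the length comparison at two values of $i$ to identify $D_m=E_n$ (which together with the shared exponent $0$ cuts the count of distinct exponents to $m+n$), and conclude via the nonvanishing of a generalized Vandermonde determinant. The only cosmetic difference is that you take coefficients in $\Q(X)$ and count via the union of the two exponent sets, whereas the paper clears the overlap by restricting the second sum to an index set $K$ of size at most $n-1$; the argument is the same.
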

\begin{proof}
The equation $U_i = V_i$ is equivalent with $\px{U_i} - \px{V_i} =
0$.
This equation can be written as
\begin{equation} \label{eq:1}
    \sum_{j=0}^{m} y_j X^{i |u_1 \dots u_j|}
    + \sum_{k \in K} z_k X^{i |v_1 \dots v_k|} = 0,
\end{equation}
where $y_j, z_k$ are some polynomials, which do not depend on $i$,
and $K$ is the set of those $k \in \{0, \dots n\}$ for which $|v_1
\dots v_k|$ is not any of the numbers $|u_1 \dots u_j|$ ($j = 0,
\dots, m$). If $U_{i_1} = V_{i_1}$ and $U_{i_2} = V_{i_2}$, then
\begin{equation*}
    (i_1 - i_2) |u_1 \dots u_m| = |U_{i_1}| - |U_{i_2}|
    = |V_{i_1}| - |V_{i_2}| = (i_1 - i_2) |v_1 \dots v_n| .
\end{equation*}
Thus $|u_1 \dots u_m| = |v_1 \dots v_n|$ and the size of $K$ is at
most $n-1$. If \eqref{eq:1} holds for $m + 1 + \# K \leq m+n$ values
of $i$, it can be viewed as a system of equations, where $y_j, z_k$
are unknowns. The coefficients of this system form a generalized
Vandermonde matrix, whose determinant is nonzero, so the system has
a unique solution $y_j = z_k = 0$ for all $j, k$, \eqref{eq:1} holds
for all $i$ and $U_i = V_i$ for all $i$.
\end{proof}

\section{Sets of Solutions} \label{sect:solsets}

Now we analyze how the polynomials
\begin{math}
    \qx{E}{x}{L}
\end{math}
behave when $L$ is not fixed. Let
\begin{equation*}
    \lhp = \set{a_1 X_1 + \dots + a_n X_n}{a_1, \dots, a_n \in \No}
    \subset \Z[X_1, \dots, X_n]
\end{equation*}
be the additive monoid of linear homogeneous polynomials with
nonnegative integer coefficients on the variables $X_1, \dots, X_n$.
The \emph{monoid ring} of $\lhp$ over $\Z$ is the ring formed by
expressions of the form
\begin{equation*}
    a_1 X^{p_1} + \dots + a_k X^{p_k},
\end{equation*}
where $a_i \in \Z$ and $p_i \in \lhp$, and the addition and
multiplication of these generalized polynomials is defined in a
natural way. This ring is denoted by $\Z[X;\lhp]$. If $L \in \Z^n$,
then the value of a polynomial $p \in \lhp$ at the point $(X_1,
\dots X_n) = L$ is denoted by $p(L)$, and the polynomial we get by
making this substitution in $s \in \Z[X;\lhp]$ is denoted by $s(L)$.

The ring $\Z[X;\lhp]$ is isomorphic to the ring $\Z[Y_1, \dots,
Y_n]$ of polynomials on $n$ variables. The isomorphism is given by
$X^{X_i} \mapsto Y_i$. However, the generalized polynomials, where
the exponents are in $\lhp$, are suitable for our purposes.

If $a_i \leq b_i$ for $i = 1, \dots, n$, then we use the notation
\begin{equation*}
    a_1 X_1 + \dots + a_n X_n \preceq b_1 X_1 + \dots + b_n X_n.
\end{equation*}
If $p, q \in \lhp$ and $p \preceq q$, then $p(L) \leq q(L)$ for all
$L \in \No^n$.

For an equation $E: x_{i_1} \dots x_{i_r} = x_{j_1} \dots x_{j_s}$
we define
\begin{equation*}
    \sx{E}{x} = \sum_{x_{i_k} = x} X^{X_{i_1} + \dots + X_{i_{k-1}}}
        - \sum_{x_{j_k} = x} X^{X_{j_1} + \dots + X_{j_{k-1}}}
    \in \Z[X;\lhp].
\end{equation*}
Now $\sx{E}{x}(L) = \qx{E}{x}{L}$. Theorem \ref{thm:weqpeq} can be
formulated in terms of the generalized polynomials $\sx{E}{x}$.

\begin{theorem}
A morphism $h: \Xi^* \to \Sigma^*$ of length type $L$ is a solution
of an equation $E$ if and only if
\begin{equation*}
    \sum_{x \in \Xi} \sx{E}{x}(L) \px{h(x)} = 0.
\end{equation*}
\end{theorem}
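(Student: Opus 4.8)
The plan is to deduce this directly from Theorem \ref{thm:weqpeq} via the substitution identity noted just above the statement, namely $\sx{E}{x}(L) = \qx{E}{x}{L}$. So the whole proof amounts to checking that identity and then invoking the earlier theorem; there is no new combinatorial content, only bookkeeping of exponents.

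First I would verify the identity term by term. Writing $E$ as $x_{i_1} \dots x_{i_r} = x_{j_1} \dots x_{j_s}$, the generalized polynomial $\sx{E}{x}$ is a $\Z$-linear combination of monomials $X^{p}$ with $p = X_{i_1} + \dots + X_{i_{k-1}} \in \lhp$ (for occurrences of $x$ on the left side of $E$) and $p = X_{j_1} + \dots + X_{j_{k-1}}$ (for occurrences on the right side). Evaluating such a monomial at $(X_1, \dots, X_n) = L$ replaces the exponent $p$ by $p(L) = |h(x_{i_1})| + \dots + |h(x_{i_{k-1}})| = \len{L}(x_{i_1} \dots x_{i_{k-1}})$, which is exactly the exponent appearing in the corresponding term of $\qx{E}{x}{L}$ under the definition in Section \ref{sect:fixedlength}. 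Matching the positive sums over occurrences of $x$ on the left of $E$ and the negative sums over occurrences on the right yields $\sx{E}{x}(L) = \qx{E}{x}{L}$ for every unknown $x$ and every length type $L \in \No^n$.

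Then I would simply substitute: $\sum_{x \in \Xi} \sx{E}{x}(L) \px{h(x)} = \sum_{x \in \Xi} \qx{E}{x}{L} \px{h(x)}$, and by Theorem \ref{thm:weqpeq} the right-hand side vanishes if and only if $h$ is a solution of $E$, which is the claim.

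I do not expect any real obstacle, as this is a purely notational reformulation. The one point worth stating carefully is that the evaluation map $s \mapsto s(L)$ on $\Z[X;\lhp]$ is a ring homomorphism — it is the composite of the isomorphism $\Z[X;\lhp] \cong \Z[Y_1,\dots,Y_n]$ with the evaluation sending $Y_i \mapsto X^{|h(x_i)|}$ — so it commutes with the finite sums defining $\sx{E}{x}$ and with the formation of the whole left-hand side. Granting that, the identity $\sx{E}{x}(L) = \qx{E}{x}{L}$ is immediate and the theorem follows at once from Theorem \ref{thm:weqpeq}.
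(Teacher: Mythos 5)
Your proof is correct and follows exactly the route the paper intends: the paper gives no explicit proof, merely noting the identity $\sx{E}{x}(L) = \qx{E}{x}{L}$ immediately before the statement and observing that the theorem is a reformulation of Theorem \ref{thm:weqpeq}. Your careful verification of that identity and of the homomorphism property of evaluation fills in precisely the omitted bookkeeping.
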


\begin{example}
Let $E: x_1 x_2 x_3 = x_3 x_1 x_2$. Now
\begin{equation*}
    \sx{E}{x_1} = 1 - X^{X_3}, \qquad
    \sx{E}{x_2} = X^{X_1} - X^{X_1 + X_3}, \qquad
    \sx{E}{x_3} = X^{X_1 + X_2} - 1.
\end{equation*}
\end{example}

The \emph{length} of an equation $E: u = v$ is $|E| = |uv|$.


\begin{theorem} \label{thm:cover}
Let $E_1, E_2$ be a pair of nontrivial equations on $n$ unknowns
that don't have the same sets of solutions of rank $n-1$. The length
types of solutions of the pair of rank $n-1$ are covered by a union
of $|E_1|^2$ $(n-1)$-dimensional subspaces of $\Q^n$. If $V_1,
\dots, V_m$ is a minimal such cover and $L \in V_i$ for some $i$,
then $E_1$ and $E_2$ have the same solutions of length type $L$ and
rank $n-1$.
\end{theorem}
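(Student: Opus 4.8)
We may first assume that $E_1$ is reduced, i.e. the two sides of $E_1$ begin with different unknowns and end with different unknowns; cancelling common prefixes and suffixes only shortens $E_1$ and does not change its set of solutions. The plan is then to apply Theorem~\ref{thm:rank} to the pair $\{E_1,E_2\}$ and to control, as $L$ ranges over $\No^n$, the set of length types for which the $2\times n$ matrix $(\qx{E_i}{x_j}{L})$ can have rank $1$. Note that a morphism of rank $n-1$ has a length type with at most one zero coordinate, since two zero coordinates would send two unknowns to $\eps$ and force rank $\le n-2$. Hence, if $L$ is the length type of a solution of the pair of rank $n-1$, then by Theorem~\ref{thm:rank} the matrix $(\qx{E_i}{x_j}{L})$ has rank at most $1$, i.e. all its $2\times 2$ minors vanish; writing the minors as the generalized polynomials
\[
    M_{jk} \;=\; \sx{E_1}{x_j}\,\sx{E_2}{x_k}-\sx{E_1}{x_k}\,\sx{E_2}{x_j}\;\in\;\Z[X;\lhp]
\]
evaluated at $L$, this says $M_{jk}(L)=0$ for all $j,k$; moreover the matrix then has rank exactly $1$ (its rows are nonzero, because the equations are nontrivial and $L$ has at most one zero coordinate), so the second part of Theorem~\ref{thm:rank} already yields that $E_1$ and $E_2$ have the same solutions of length type $L$.

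Next I would show that not all the $M_{jk}$ vanish in $\Z[X;\lhp]$: otherwise, for \emph{every} $L$ with at most one zero coordinate the matrix $(\qx{E_i}{x_j}{L})$ would have rank exactly $1$, so by Theorem~\ref{thm:rank} $E_1$ and $E_2$ would have the same solutions of length type $L$; as every solution of rank $n-1$ has such a length type, $E_1$ and $E_2$ would have the same solutions of rank $n-1$, contrary to hypothesis. So fix $j,k$ with $M:=M_{jk}\ne 0$ and write $M=\sum_\ell c_\ell X^{q_\ell}$ with the $q_\ell\in\lhp$ pairwise distinct and the $c_\ell\ne 0$. Every length type $L$ of a rank-$n-1$ solution of the pair satisfies $M(L)=0$, which forces $q_\ell(L)=q_{\ell'}(L)$ for some $\ell\ne\ell'$; so these length types are covered by the finitely many $(n-1)$-dimensional subspaces $\{L\mid (q_\ell-q_{\ell'})(L)=0\}$ of $\Q^n$ (each $q_\ell-q_{\ell'}$ being a nonzero homogeneous linear form).

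The hard part is to bring the number of subspaces down to $|E_1|^2$: a crude count of the terms of $M$ is far too wasteful, as $M$ may have on the order of $|E_1|\cdot|E_2|$ terms. Rather than using an arbitrary minor, I would analyse the rank condition through the lowest- and highest-degree parts of the polynomials $\qx{E_1}{x_j}{L}$. Since $E_1$ is reduced, these parts are governed by the first and last unknowns of its two sides; a leading-term argument should show that if no two distinct prefix-sum forms $X_{i_1}+\dots+X_{i_{p-1}}$ of $E_1$ coincide at $L$, then the two rows of $(\qx{E_i}{x_j}{L})$ cannot be proportional, i.e. the matrix has rank $\ge 2$. Consequently every length type of a rank-$n-1$ solution of the pair makes two distinct prefix-sum forms of $E_1$ agree, and there are at most $\binom{|E_1|}{2}\le|E_1|^2$ such coincidence subspaces. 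Carrying out this leading-term analysis cleanly, with the exact constant and all the zero-coordinate and degenerate sub-cases under control, is the step I expect to be the main obstacle.

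For the last assertion, take a minimal cover $V_1,\dots,V_m$. By Lemma~\ref{lem:elemtrans} and Lemma~\ref{lem:rdim2}, a rank-$n-1$ solution of the pair of length type $L$ factors through a fundamental solution $f\colon\Xi^*\to\Xi^*$; then $g\circ f$ is a solution of the pair for every morphism $g$, with length type $\sum_i|g(x_i)|\,(|f(x_1)|_{x_i},\dots,|f(x_n)|_{x_i})$, and $n-1$ of these $n$ vectors are linearly independent, so they span an $(n-1)$-dimensional subspace $V_f\ni L$ in which the length types of rank-$n-1$ solutions of the pair are Zariski dense. Minimality forces each $V_i$ to be such a $V_f$, so every $M_{jk}$ vanishes, after evaluation, on a Zariski-dense subset of $V_i$; since $M_{jk}(L)=0$ depends only on which of the $q_\ell$ coincide at $L$, and on a dense open part of $V_i$ these coincidences are precisely the ones holding identically on $V_i$, the vanishing extends to all of $V_i$. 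Thus for every $L\in V_i$ the matrix $(\qx{E_i}{x_j}{L})$ has rank at most $1$; if $L$ has at most one zero coordinate the rank is exactly $1$ and Theorem~\ref{thm:rank} gives that $E_1$ and $E_2$ have the same solutions of length type $L$, hence the same ones of rank $n-1$, while if $L$ has two or more zero coordinates neither equation has a solution of rank $n-1$ and length type $L$. This proves the theorem.
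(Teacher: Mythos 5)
Your overall strategy matches the paper's: pass to the generalized polynomials $\sx{E_i}{x_j}$, observe via Theorem~\ref{thm:rank} that every length type of a rank-$(n-1)$ solution kills all $2\times 2$ minors of the matrix $(\sx{E_i}{x_j})$, show some minor is not identically zero, extract a finite cover from exponent coincidences, and handle the minimal-cover claim by combining minimality with Lemma~\ref{lem:rdim} so that the coincidence relations hold identically on each $V_i$. That last part of your argument (generic coincidence pattern on a dense set of solution length types in $V_i$ forces the minor to vanish on all of $V_i$) is sound and is essentially the paper's argument.

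However, there is a genuine gap exactly where you flag it: the bound $|E_1|^2$. Your crude count gives on the order of $(|E_1|\,|E_2|)^2$ subspaces, which depends on $|E_2|$ and would destroy the application in Theorem~\ref{thm:chain} (where $E_2$ may be arbitrarily long). Your proposed repair --- a leading-term analysis showing that rank $1$ forces two distinct prefix-sum forms of $E_1$ alone to coincide at $L$ --- is left unproved, and it is not clear it is true as stated: proportionality of the two rows at $L$ is a condition mixing the exponents of both equations, and reducing it to coincidences internal to $E_1$ is not a routine leading-term computation. The paper's actual device is different and worth knowing. Fix one nonzero minor $t=s_{1k}s_{2l}-s_{1l}s_{2k}=\sum_i X^{p_i}-\sum_i X^{q_i}$ with $p_i\ne q_j$. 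If $t(L)=0$ then the multisets $\{p_i(L)\}$ and $\{q_i(L)\}$ coincide, so in particular $\min_i p_i(L)=\min_j q_j(L)$. Each $p_i$ has the form $a+b$ with $a$ an exponent of $s_{1k}$ or $s_{1l}$ and $b$ an exponent of $s_{2l}$ or $s_{2k}$; crucially, the exponents within each sign-part of a single $s_{2\cdot}$ are prefix sums of one word and hence form a chain under the componentwise order $\preceq$, which is compatible with evaluation at every $L\in\No^n$. Therefore, for each exponent $a$ of $s_{1k}$ or $s_{1l}$, among the polynomials $a+b$ occurring among the $p_i$ there is a $\preceq$-least one, and the minimum $\min_i p_i(L)$ is always attained by one of these at most $|E_1|$ candidates, uniformly in $L$. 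The same holds for the $q_j$, so every solution length type satisfies one of at most $|E_1|^2$ equations $p=q$ with $p\ne q$, each defining an $(n-1)$-dimensional subspace of $\Q^n$. Without this (or an equivalent) argument, the quantitative content of the theorem --- and hence Theorem~\ref{thm:chain} --- is not established.
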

\begin{proof}
Let
\begin{math}
    s_{ij} = \sx{E_i}{x_j}
\end{math}
for $i = 1, 2$ and $j = 1, \dots, n$. If all $2 \times 2$ minors of
the $2 \times n$ matrix $(s_{ij})$ are zero, then for all length
types $L$ of solutions of rank $n-1$ the rank of the matrix
$(q_{ij})$ in Theorem \ref{thm:rank} is 1 and $E_1$ and $E_2$ are
equivalent, which is a contradiction. Thus there are $k,l$ such that
$t_{kl} = s_{1k} s_{2l} - s_{1l} s_{2k} \ne 0$. The generalized
polynomial $t_{kl}$ can be written as
\begin{equation*}
   t_{kl} = \sum_{i=1}^M X^{p_i} - \sum_{i=1}^N X^{q_i},
\end{equation*}
where $p_i, q_i \in \lhp$ and $p_i \ne q_j$ for all $i,j$. If $L$ is
a length type of a solution of rank $n-1$, then $M=N$ and $L$ must
be a solution of the system of equations
\begin{equation} \label{eq:ssyst}
    p_i = q_{\sigma(i)} \qquad (i=1,\dots,M)
\end{equation}
for some permutation $\sigma$. For every $\sigma$ the equations
determine an at most $(n-1)$-dimensional space.

Let
\begin{equation*}
    s_{1k} = \sum_i X^{a_i} - \sum_i X^{a'_i},
    \quad
    s_{2l} = \sum_i X^{b_i} - \sum_i X^{b'_i},
    \quad
    s_{1l} = \sum_i X^{c_i} - \sum_i X^{c'_i},
    \quad
    s_{2k} = \sum_i X^{d_i} - \sum_i X^{d'_i},
\end{equation*}
where $a_i \preceq a_{i+1}$, $a'_i \preceq a'_{i+1}$, and so on. The
polynomials $p_i$ form a subset of the polynomials $a_i + b_j$,
$a'_i + b'_j$, $c_i + d'_j$ and $c'_i + d_j$ (the reason that they
form just a subset is that we assumed $p_i \ne q_j$ for all $i,j$).
For any $i$, let $j_i$ be the smallest index $j$ such that $a_i +
b_j = p_m$ for some $m$. Now for every $i,j,m$ such that $a_i + b_j
= p_m$ we have $a_i + b_{j_i} \preceq p_m$. We can do a similar
thing for the polynomials $a'_i, b'_i$ and $c_i, d'_i$ and $c'_i,
d_i$. In this way we obtain at most $|E_1|$ polynomials $p_i$ such
that for any $L$ the value of one of these polynomials is minimal
among the values $p_i(L)$. Similarly we obtain at most $|E_1|$
``minimal'' polynomials $q_i$. It is sufficient to consider only
those systems \eqref{eq:ssyst}, where one of the equations is formed
by these ``minimal'' polynomials $p_i, q_i$. There are at most
$|E_1|^2$ possible pairs of such polynomials, and each of them
determines an $(n-1)$-dimensional space.

Consider the second claim. Because the cover is minimal, there is a
solution of rank $n-1$ whose length type is in $V_i$, but not in any
other $V_j$. By Lemma \ref{lem:rdim}, the length types of solutions
of rank $n-1$ in this space cannot be covered by a finite union of
$(n-2)$-dimensional spaces. Thus one of the systems \eqref{eq:ssyst}
must determine the space $V_i$. The same holds for systems coming
from all other nonzero $2 \times 2$ minors of the matrix $(s_{ij})$,
so $E_1$ and $E_2$ have the same solutions of rank $n-1$ and length
type $L$ for all $L \in V_i$ by Theorem \ref{thm:rank}.
\end{proof}

The following example illustrates the proof of Theorem
\ref{thm:cover}. It gives a pair of equations on three unknowns,
where the required number of subspaces is two. We do not know any
example, where more spaces would be necessary.

\begin{example}
Consider the equations
\begin{math}
    E_1: x_1 x_2 x_3 = x_3 x_1 x_2
\end{math}
and
\begin{math}
    E_2: x_1 x_2 x_1 x_3 x_2 x_3 = x_3 x_1 x_3 x_2 x_1 x_2
\end{math}
and the generalized polynomial
\begin{equation*}
\begin{split}
    s =& \sx{E_1}{x_1} \sx{E_2}{x_3}
        - \sx{E_1}{x_3} \sx{E_2}{x_1} \\
    =& X^{2 X_1 + X_2} + X^{2 X_1 + 2 X_2 + X_3} + X^{X_1 + 2 X_3}
        + X^{X_1 + X_2 + X_3}
    - X^{2 X_1 + X_2 + X_3} - X^{X_1 + X_3}
        - X^{2 X_1 + 2 X_2} - X^{X_1 + X_2 + 2 X_3}.
\end{split}
\end{equation*}
If $L$ is a length type of a nontrivial solution of the pair $E_1,
E_2$, then $s(L) = 0$. If $s(L) = 0$, then $L$ must satisfy an
equation $p = q$, where
\begin{math}
    p \in \{2 X_1 + X_2, X_1 + 2 X_3, X_1 + X_2 + X_3\}
\end{math}
and
\begin{math}
    q \in \{X_1 + X_3, 2 X_1 + 2 X_2\}.
\end{math}
The possible relations are
\begin{equation*}
    X_3 = 0, \qquad
    X_1 + X_2 = X_3, \qquad
    X_2 = 0, \qquad
    X_1 + 2 X_2 = 2 X_3.
\end{equation*}
If $L$ satisfies one of the first three, then $s(L) = 0$. If $L$
satisfies the last one, then $s(L) \ne 0$, except if $L = 0$. So if
$h$ is a nonperiodic solution, then
\begin{equation*}
    |h(x_3)| = 0 \quad \text{or} \quad
    |h(x_1 x_2)| = |h(x_3)| \quad \text{or} \quad
    |h(x_2)| = 0.
\end{equation*}
There are no nonperiodic solutions with $h(x_2) = \eps$, but every
$h$ with $h(x_3) = \eps$ or $h(x_1 x_2) = h(x_3)$ is a solution.
\end{example}

An equation $u = v$ is \emph{balanced}, if $|u|_x = |v|_x$ for every
unknown $x$. In \cite{HaNo03} it was proved that if an independent
pair of equations on three unknowns has a nonperiodic solution, then
the equations must be balanced. With the help of Theorem
\ref{thm:cover} we get a significantly simpler proof and a
generalization for this result.

\begin{theorem} \label{thm:balance}
Let $E_1, E_2$ be a pair of equations on $n$ unknowns having a
solution of rank $n-1$. If $E_1$ is not balanced, then every
solution of $E_1$ of rank $n-1$ is a solution of $E_2$.
\end{theorem}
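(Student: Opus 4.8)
The plan is to deduce the statement from Theorem~\ref{thm:cover} by showing that, when $E_1$ is not balanced, the hyperplane of length types naturally attached to $E_1$ must be one of the subspaces occurring in any minimal cover produced by that theorem. First I would clear away the degenerate cases. If $E_1$ is trivial it is balanced, so the hypothesis already forces $E_1$ to be nontrivial. If $E_2$ is trivial, every morphism is a solution of $E_2$ and there is nothing to prove; so assume both equations are nontrivial. If $E_1$ and $E_2$ have the same sets of solutions of rank $n-1$, the conclusion is immediate; so assume they do not, which puts us in the situation of Theorem~\ref{thm:cover}.

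Next I would extract the relevant hyperplane. Write $E_1 : u = v$. Since $E_1$ is not balanced, there is an unknown $x_k$ with $|u|_{x_k} \ne |v|_{x_k}$. Every solution $h$ of $E_1$ satisfies $|h(u)| = |h(v)|$, that is, $\sum_{j} (|u|_{x_j} - |v|_{x_j})\,|h(x_j)| = 0$, so the length type of $h$ lies in the $(n-1)$-dimensional subspace
\begin{equation*}
    H = \set{(a_1,\dots,a_n)\in\Q^n}{\sum_{j} (|u|_{x_j}-|v|_{x_j})\,a_j = 0}.
\end{equation*}
In particular the length types of all solutions of $E_1$ of rank $n-1$, and hence also those of all solutions of the pair of rank $n-1$, lie in $H$.

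Now let $V_1,\dots,V_m$ be a minimal cover of the length types of solutions of the pair of rank $n-1$ by $(n-1)$-dimensional subspaces, as furnished by Theorem~\ref{thm:cover}; since the pair has a solution of rank $n-1$, we have $m \ge 1$. As in the proof of Theorem~\ref{thm:cover} (the step combining minimality of the cover with Lemma~\ref{lem:rdim}), for each $i$ the length types of solutions of the pair of rank $n-1$ that lie in $V_i$ cannot be covered by a finite union of $(n-2)$-dimensional subspaces. Those length types also lie in $H$, hence in $V_i \cap H$; but if $V_i \ne H$, then $V_i \cap H$ is a single $(n-2)$-dimensional subspace, a contradiction. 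Therefore $V_i = H$, so $H$ appears in the cover. By Theorem~\ref{thm:cover}, for every $L \in H$ the equations $E_1$ and $E_2$ have the same solutions of length type $L$ and rank $n-1$. Finally, any solution $h$ of $E_1$ of rank $n-1$ has its length type in $H$, so $h$ is a solution of $E_2$ (indeed of rank $n-1$), as required.

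The only delicate point is the invocation of the "essentiality" of each subspace in a minimal cover: that the rank $n-1$ length types it contains are not confined to finitely many $(n-2)$-dimensional subspaces. This is exactly the content of the minimality-plus-Lemma~\ref{lem:rdim} argument already carried out inside the proof of Theorem~\ref{thm:cover}, so no additional work is needed there; everything else reduces to the elementary fact that two distinct hyperplanes of $\Q^n$ meet in an $(n-2)$-dimensional subspace, together with the bookkeeping of the degenerate cases in the first paragraph.
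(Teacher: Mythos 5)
Your proposal is correct and follows essentially the same route as the paper: unbalancedness of $E_1$ confines all length types to a single hyperplane $H$, which must therefore be the minimal cover in Theorem~\ref{thm:cover}, whose second part then finishes the argument. You arrive at this by showing every member of an arbitrary minimal cover equals $H$, whereas the paper simply observes that $\{H\}$ itself is a minimal cover; this is only a minor difference in bookkeeping, and your handling of the degenerate cases is a welcome addition to the paper's terser write-up.
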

\begin{proof}
The length types of solutions of $E_1$ are covered by a single
$(n-1)$-dimensional space $V$. Because the pair $E_1, E_2$ has a
solution of rank $n-1$, $V$ is a minimal cover for the length types
of the solutions of the pair of rank $n-1$. By Theorem
\ref{thm:cover}, $E_1$ and $E_2$ have the same solutions of length
type $L$ and rank $n-1$ for all $L \in V$.
\end{proof}

Another way to think of this result is that if $E_1$ is not balanced
but has a solution of rank $n-1$ that is not a solution of $E_2$,
then the pair $E_1, E_2$ causes a larger than minimal defect effect.

%

\section{Independent Systems} \label{sect:indsyst}

A system of word equations $E_1, \dots, E_m$ is \emph{independent},
if for every $i$ there is a morphism that is not a solution of
$E_i$, but is a solution of all the other equations.

A sequence of equations $E_1, \dots, E_m$ is a \emph{chain}, if for
every $i$ there is a morphism that is not a solution of $E_i$, but
is a solution of all the preceding equations.

The question of the maximal size of an independent system is open.
Only things that are known are that independent systems cannot be
infinite and there are systems of size $\Theta(n^4)$, where $n$ is
the number of unknowns. For a survey on these topics, see
\cite{KaSa11}.

We study the following variation of the above question: how long can
a sequence of equations $E_1, \dots, E_m$ be, if for every $i$ there
is a morphism of rank $n-1$ that is not a solution of $E_i$, but is
a solution of all the preceding equation? We prove an upper bound
depending quadratically on the length of the first equation. For
three unknowns we get a similar bound for the size of independent
systems and chains.

\begin{theorem} \label{thm:chain}
Let $E_1, \dots, E_m$ be nontrivial equations on $n$ unknowns having
a common solution of rank $n-1$. For every $i \in \{1, \dots,
m-1\}$, assume that there is a solution of the system $E_1, \dots,
E_i$ of rank $n-1$ that is not a solution of $E_{i+1}$. If the
length types of solutions of the pair $E_1, E_2$ of rank $n-1$ are
covered by a union of $N$ $(n-1)$-dimensional subspaces, then $m
\leq N + 1$. In general, $m \leq |E_1|^2 + 1$.
\end{theorem}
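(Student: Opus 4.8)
The plan is to track how the cover of length types shrinks as we add equations one at a time, using Theorem~\ref{thm:cover} as the main engine. The key observation is that at each step of the chain, the new equation $E_{i+1}$ must \emph{properly} cut down the set of length types that survive, so the minimal number of subspaces covering those length types must strictly decrease.

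\textbf{Setup.} Let $S_i$ denote the set of length types of solutions of rank $n-1$ of the system $E_1, \dots, E_i$. By hypothesis each $S_i$ is nonempty (there is a common solution of rank $n-1$), and $S_1 \supseteq S_2 \supseteq \dots \supseteq S_m$. First I would observe that by Theorem~\ref{thm:cover} applied to the pair $E_1, E_2$ (which, being part of a chain, cannot have the same solutions of rank $n-1$, since some solution of $E_1$ of rank $n-1$ is not a solution of $E_2$), the set $S_2 \subseteq S_1$ is covered by a minimal union $V_1, \dots, V_{N'}$ of $(n-1)$-dimensional subspaces with $N' \leq N$ (and $N' \leq |E_1|^2$). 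Moreover, by the second part of Theorem~\ref{thm:cover}, on each $V_j$ in a minimal cover, $E_1$ and $E_2$ already have the same solutions of length type $L$ and rank $n-1$; hence $S_1 \cap V_j = S_2 \cap V_j$, and in fact $S_i \cap V_j$ is determined by which of the later equations $E_3, E_4, \dots$ cut into $V_j$.

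\textbf{The descent argument.} The heart of the proof is to show that each step $i \mapsto i+1$ eliminates at least one of the subspaces $V_j$ entirely from the cover of $S_{i+1}$. Suppose for contradiction that $E_{i+1}$ does not remove any $V_j$, i.e. $S_{i+1} \cap V_j$ is still "large" in every $V_j$ for which $S_i \cap V_j$ was large. The solution of $E_1, \dots, E_i$ of rank $n-1$ that is not a solution of $E_{i+1}$ has some length type $L \in S_i$, so $L \in V_j$ for some $j$. By Lemma~\ref{lem:rdim}, the length types of solutions of $E_1, \dots, E_i$ of rank $n-1$ lying in $V_j$ cannot be covered by finitely many $(n-2)$-dimensional subspaces, so they form a "full" subset of $V_j$; then by the uniqueness statement in Theorem~\ref{thm:rank} (the matrix of $q$-polynomials has rank $1$ on $V_j$, forcing equivalence of equations there), $E_{i+1}$ must agree with $E_1, \dots, E_i$ on \emph{all} solutions of rank $n-1$ with length type in $V_j$ --- contradicting the existence of the length type $L \in V_j$ at which they disagree. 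Therefore at least one $V_j$ is dropped at each of the $m-1$ steps, and since we start with at most $N$ (resp. $|E_1|^2$) subspaces and must still have at least one left to accommodate the final common solution, we get $m - 1 \leq N$ (resp. $m - 1 \leq |E_1|^2$), i.e. $m \leq N+1$ (resp. $m \leq |E_1|^2 + 1$).

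\textbf{Main obstacle.} The delicate point is making precise the claim that $E_{i+1}$ "drops a subspace": I need to argue that if a rank-$(n-1)$ solution of $E_1,\dots,E_i$ with length type $L \in V_j$ fails to solve $E_{i+1}$, then \emph{every} length type in $V_j$ disappears from $S_{i+1}$ (not just $L$ itself), so that $V_j$ genuinely leaves the minimal cover rather than merely being replaced by a lower-dimensional piece. This requires combining Lemma~\ref{lem:rdim} (to say the surviving length types in $V_j$ are either all of a dense subset or else the equations are forced to be equivalent on $V_j$) with the second claim of Theorem~\ref{thm:rank}, and I expect the care needed is in handling the finitely many length types in $V_j$ with more than one zero component, which both lemmas exclude --- these form a union of lower-dimensional coordinate subspaces intersected with $V_j$ and can be absorbed harmlessly, but the bookkeeping must be stated carefully.
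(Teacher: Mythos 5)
Your overall strategy is the paper's: induct along the chain and show that the cover of length types by $(n-1)$-dimensional subspaces must lose at least one subspace at each step, whence $m-1 \leq N$. The difference lies in how the descent step is justified, and there your argument has genuine gaps. First, Theorem \ref{thm:cover} (both parts) is a statement about a \emph{pair of equations}, while you need to apply it to the system $E_1,\dots,E_i$ versus $E_{i+1}$; the paper handles this by first replacing each $E_i$ with a single equation equivalent to the system $E_1,\dots,E_i$, a reduction you omit. Second, your justification that the $V_j$ containing the witness length type $L$ gets dropped re-derives, rather than cites, the second claim of Theorem \ref{thm:cover}, and the re-derivation is incomplete: Lemma \ref{lem:rdim} applied at $L$ produces \emph{some} $(n-1)$-dimensional subspace $V \ni L$ whose rank-$(n-1)$ length types are not finitely coverable by $(n-2)$-dimensional spaces, but that $V$ need not equal your $V_j$; and the second claim of Theorem \ref{thm:rank} has as a \emph{hypothesis} that the matrix $(q_{ij})$ has rank $1$, which you assert holds "on $V_j$" without proof --- establishing it at every point of $V_j$ requires the argument with the systems \eqref{eq:ssyst} from the proof of Theorem \ref{thm:cover}, which is precisely what its second claim packages up for you.

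The clean route, which is the paper's, is: take a minimal subcover $\mathcal{C}_{i+1}\subseteq\mathcal{C}_i$ of the length types of rank-$(n-1)$ solutions of $E_1,\dots,E_{i+1}$; if $\mathcal{C}_{i+1}=\mathcal{C}_i$, then the second part of Theorem \ref{thm:cover}, applied to the pair of single equations equivalent to the two consecutive systems, shows they have the same rank-$(n-1)$ solutions of every length type in $\bigcup\mathcal{C}_i$, hence the same rank-$(n-1)$ solutions altogether, contradicting the chain hypothesis. No claim that "every length type in $V_j$ disappears" is needed. Incidentally, the "main obstacle" you flag --- length types in $V_j$ with more than one zero component --- is vacuous here: a morphism whose length type has two or more zero components has rank at most $n-2$, so no rank-$(n-1)$ solutions live over such length types and no bookkeeping is required for them.
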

\begin{proof}
We can assume that $E_i$ is equivalent with the system $E_1, \dots,
E_i$ for all $i \in \{1, \dots, m\}$. Let the length types of
solutions of $E_2$ of rank $n-1$ be covered by the
$(n-1)$-dimensional spaces $V_1, \dots, V_N$. Some subset of these
spaces forms a minimal cover for the length types of solutions of
$E_3$ of rank $n-1$. If this minimal cover would be the whole set,
then $E_2$ and $E_3$ would have the same solutions of rank $n-1$ by
the second part of Theorem \ref{thm:cover}. Thus the length types of
solutions of $E_3$ of rank $n-1$ are covered by some $N-1$ of these
spaces. We conclude inductively that the length types of solutions
of $E_i$ of rank $n-1$ are covered by some $N-i+2$ of these spaces
for all $i \in \{2, \dots, m\}$. It must be $N-m+2 \geq 1$, so $m
\leq N + 1$. By the first part of Theorem \ref{thm:cover}, $N \leq
|E_1|^2$.
\end{proof}

In Theorem \ref{thm:chain} it is not enough to assume that the
equations are independent and have a common solution of rank $n-1$.
If the number of unknowns is not fixed, then there are arbitrarily
large such systems, where the length of every equation is 10, see
e.g. \cite{HaKaPl02}.

In the case of three unknowns, Theorem \ref{thm:chain} gives an
upper bound depending on the length of the shortest equation for the
size of an independent system of equations, or an upper bound
depending on the length of the first equation for the size of a
chain of equations. A better bound in Theorem \ref{thm:cover} would
immediately give a better bound in the following corollary.

\begin{corollary}
If $E_1, \dots, E_m$ is an independent system on three unknowns
having a nonperiodic solution, then $m \leq |E_1|^2 + 1$. If $E_1,
\dots, E_m$ is a chain of equations on three unknowns, then $m \leq
|E_1|^2 + 5$.
\end{corollary}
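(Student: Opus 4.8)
The plan is to deduce both bounds from Theorem~\ref{thm:chain} by observing that on three unknowns the relevant hypotheses are automatically available, modulo the distinction between ``rank $n-1$'' and ``nonperiodic''. First I would handle the independent system. On three unknowns, $n-1 = 2$, so a morphism of rank at most $2$ is simply a morphism that is not periodic, together with possibly periodic ones; the key point is that an independent system with a nonperiodic solution can be trimmed so that Theorem~\ref{thm:chain} applies. Concretely, given an independent system $E_1, \dots, E_m$ with a nonperiodic common solution, I would reorder the equations so that $E_1$ is one of shortest length (this is legitimate since independence is symmetric in the indices, and we want the bound in terms of $|E_1|$). Independence gives, for each $i$, a morphism $h_i$ that solves all $E_j$ with $j \ne i$ but not $E_i$; since the system has a nonperiodic solution and $h_i$ differs from it on exactly one equation, one checks that $h_i$ can be taken nonperiodic, hence of rank $n-1=2$. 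This is exactly the hypothesis ``there is a solution of the system $E_1,\dots,E_i$ of rank $n-1$ that is not a solution of $E_{i+1}$'' needed in Theorem~\ref{thm:chain} (after reindexing), together with the existence of a common rank-$2$ solution. Applying Theorem~\ref{thm:chain} yields $m \le |E_1|^2 + 1$.

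For the chain, the subtlety is that a chain only guarantees, for each $i$, a morphism that solves $E_1,\dots,E_{i-1}$ but not $E_i$ — with no control on its rank, and no guaranteed common solution of rank $n-1$. The plan is to reduce to the situation of Theorem~\ref{thm:chain} by splitting off the ``low-rank'' part of the chain. A morphism of rank $0$ is the trivial one (all unknowns mapped to $\eps$), and a morphism of rank $1$ is periodic, i.e. factors through a single word; on three unknowns the set of equations with a given periodic solution, or with the trivial solution, is very restricted, so only boundedly many consecutive equations in the chain can be ``witnessed'' by such low-rank morphisms before one is forced to use a rank-$2$ witness. Quantifying this: the equations having the trivial morphism as a solution are all equations, so that gives nothing; but an equation $u=v$ is satisfied by a fixed nontrivial periodic morphism $h$ (with $h(x_i) = p^{k_i}$ for a primitive $p$) iff $\sum k_i(|u|_{x_i} - |v|_{x_i}) = 0$ as an integer identity on the exponents in the correct order — for three unknowns these conditions carve out only a few possibilities, bounded by an absolute constant. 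So at most a constant number (which the corollary fixes at $4$) of the $E_i$ can fail to admit a rank-$2$ witness over the preceding ones; after discarding those and any initial segment that lacks a common rank-$2$ solution, what remains is a chain satisfying the hypotheses of Theorem~\ref{thm:chain} with first equation still of length at most $|E_1|$ (one should order or argue that the first surviving equation has length $\le |E_1|$, or simply keep $E_1$ itself if it survives). Theorem~\ref{thm:chain} then bounds the surviving length by $|E_1|^2 + 1$, and adding back the discarded equations gives $m \le |E_1|^2 + 5$.

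The main obstacle I anticipate is the bookkeeping in this low-rank reduction for the chain: precisely identifying how many consecutive equations can be separated by periodic or trivial witnesses, and ensuring the reindexing preserves ``the first surviving equation has length $\le |E_1|$'', so that the constant is exactly $4$ rather than some larger number. For the independent system the analogous step — promoting each independence witness to a nonperiodic (rank-$2$) one — is cleaner, because the ambient system already has a nonperiodic solution, so I expect that half to go through with little friction. A secondary point to be careful about is that Theorem~\ref{thm:chain} requires the equations to be nontrivial; trivial equations in an independent system or chain are vacuous and can be deleted at the outset without changing $m$ in the relevant way, so this is a harmless normalization.
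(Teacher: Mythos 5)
The paper gives no explicit proof of this corollary---it is meant to follow from Theorem~\ref{thm:chain}---so your job is precisely to supply the two reductions you sketch, and in both of them the decisive step is asserted rather than proved. For the independent system, the hypothesis of Theorem~\ref{thm:chain} that you must verify is that for each $i$ there is a solution of $E_1,\dots,E_i$ \emph{of rank $2$} that fails $E_{i+1}$. Your justification---``the system has a nonperiodic solution and $h_i$ differs from it on exactly one equation, one checks that $h_i$ can be taken nonperiodic''---is not an argument: the independence witness $h_i$ is an unrelated morphism, and a priori it can be periodic, in which case $E_i$ is unbalanced along the exponent vector of $h_i$ while every other equation is balanced along it. Ruling this out (or showing it can happen for at most one index, which via Theorem~\ref{thm:chain} only yields $m\le|E_1|^2+2$ unless that exceptional equation is placed first) genuinely requires Theorem~\ref{thm:balance} and Lemma~\ref{lem:rdim}: e.g.\ one must show that two unbalanced equations with a common rank-$2$ solution have parallel balance vectors, since otherwise the length types of their common rank-$2$ solutions would lie in a $1$-dimensional space, contradicting Lemma~\ref{lem:rdim}. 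None of this is in your proposal, so the hypothesis of Theorem~\ref{thm:chain} is simply not verified.

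For the chain the overall plan (split each step according to whether it shrinks the rank-$2$ solution set or only the periodic one) is the right one, but both summands in your count $(|E_1|^2+1)+4$ are unjustified. The correct bookkeeping is: a periodic morphism $x_j\mapsto p^{k_j}$ solves $u=v$ iff $\sum_j k_j(|u|_{x_j}-|v|_{x_j})=0$, so a step that strictly shrinks the periodic solution set must strictly enlarge the span in $\Q^3$ of the balance vectors, which can happen at most $3$ times---your ``absolute constant $4$'' is never derived and is not the right mechanism (the count is by dimension, not by ``few possibilities'' for a fixed periodic witness). On the other side, if $E_{j_1},\dots,E_{j_k}$ (with $j_1=1$, since an injective rank-$2$ morphism such as $x_1\mapsto a$, $x_2\mapsto ab$, $x_3\mapsto abb$ witnesses that $E_1$ excludes some rank-$2$ morphism) are the equations that shrink the rank-$2$ solution set, only $E_{j_1},\dots,E_{j_{k-1}}$ are guaranteed to satisfy the hypotheses of Theorem~\ref{thm:chain}, because $E_{j_k}$ may eliminate all rank-$2$ solutions; this gives $k\le|E_1|^2+2$, not $|E_1|^2+1$. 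The totals coincide at $|E_1|^2+5$, but by accident. Finally, your worry about trivial equations is moot: a trivial equation has no non-solution, so it cannot occur in a chain or an independent system at all.
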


\bibliographystyle{eptcs}
\bibliography{ref_poly}

\end{document}